\definecolor{battleshipgrey}{rgb}{0.52, 0.52, 0.51} 
\theoremstyle{plain}
\newtheorem{theorem}{Theorem}[section]
\newtheorem{lemma}[theorem]{Lemma}
\newtheorem{proposition}[theorem]{Proposition}
\newtheorem{definition}[theorem]{Definition}
\theoremstyle{remark}
\newtheorem{remark}{Remark}[section]
\newtheorem{example}{Example}[section]
\numberwithin{equation}{section}
\newcommand{\bA}{\mathbb{A}}
\newcommand{\bB}{\mathbb{B}}
\newcommand{\K}{\mathbb{K}}
\newcommand{\bK}{\mathbb{K}}
\newcommand{\bR}{\mathbb{R}}
\newcommand{\R}{\mathbb{R}}
\newcommand{\bC}{\mathbb{C}}
\newcommand{\Z}{\mathbb{Z}}
\newcommand{\N}{\mathbb{N}}
\newcommand{\PP}{\mathbb{P}}
\newcommand{\cX}{\mathcal{X}}
\newcommand{\cG}{{\mathcal G}}
\newcommand{\cS}{\mathcal{S}}
\newcommand{\cH}{\mathcal{H}}
\newcommand{\cR}{\mathcal{R}}
\newcommand{\Gl}{\mathrm{Gl}}
\newcommand{\GL}{\mathrm{GL}}
\newcommand{\SL}{\mathrm{SL}}
\newcommand{\UU}{\mathrm{U}}
\newcommand{\Hom}{\mathrm{Hom}}
\newcommand{\Herm}{\mathrm{Herm}}
\newcommand{\Aut}{\mathrm{Aut}}
\newcommand{\End}{\mathrm{End}}
\newcommand{\Gras}{\mathrm{Gras}}
\newcommand{\Graph}{\mathrm{Graph}}
\newcommand{\Lag}{\mathrm{Lag}}
\newcommand{\id}{\mathrm{id}}
\newcommand{\Aherm}{\mathrm{SHerm}}
\newcommand{\CR}{\mathrm{CR}}
\newcommand{\tr}{\mathrm{trace}}
\newcommand{\af}{\mathrm{af}}
\newcommand{\Der}{{\mathrm{Der}}}
\newcommand{\setof}[2]{\{ #1 \mid #2\}}
\newcommand{\inv}{^{-1}}
\newcommand{\msk}{\medskip}
\newcommand{\ssk}{\smallskip}
\newcommand{\nin}{\noindent}
\begin{document}

\title[An essay on the completion of quantum theory.I]{An essay on the completion of quantum theory.
\\ I: General setting}

\subjclass[2010]{ 
46L89,  
51M35 ,	
58B25,  	
81P05, 
81R99,  	
81Q70.  	
}

\keywords{(geometry of) quantum mechanics, axiomatics, completion, projective line, generalized projective geometry,
associative geometry, cross ratio, 
(associative and Jordan) algebras, (self) duality, unitary group.
}

\author{Wolfgang Bertram}

\address{Institut \'{E}lie Cartan de Lorraine \\
Universit\'{e} de Lorraine at Nancy, CNRS, INRIA \\
B.P. 70239 \\
F-54506 Vand\oe{}uvre-l\`{e}s-Nancy Cedex, France\\
{\tt \url{http://www.iecl.univ-lorraine.fr/~Wolfgang.Bertram/}} }

\email{\tt{wolfgang.bertram@univ-lorraine.fr}}

\begin{abstract}
We propose a geometric setting of the axiomatic mathematical formalism of
quantum theory. 
Guided by the idea that 
understanding the mathematical structures of these axioms is of similar importance as was historically
the process of understanding the axioms of geometry,
we  complete the spaces of observables and of states in a similar way as in classical geometry
linear or affine spaces are completed by projective spaces. In this sense, our theory can be considered
as a ``completion of usual linear quantum theory'', such that the usual theory appears as 
the special case where a reference frame is fixed once and for all. 
In the present first part, this general setting is explained. Dynamics (time evolution) will be discussed in subsequent work.
\end{abstract}

\maketitle

\vskip 10mm

\begin{center}
{\sl
Dedicated to the memory of  \href{https://de.wikipedia.org/wiki/Tobias_Brandes}{Tobias Brandes} (1966 -- 2017) }
\end{center}

\bigskip

\section*{Preamble}

Since our years of study in G\"ottingen, Tobias and I had a plan to write, some day, a book,
``our'' book on quantum mechanics. Our paths separated after the Diplom: Tobias became a physicist, and I, 
a mathematician. We believed that we had time to carry out our project. But we had not.  

\msk
What I'm going to write up here, is my version of what might have been  a draft for the first chapter of this book. 
There are a lot of excellent textbooks on quantum mechanics, and our aim cannot  be, and never was,
 to copy them, or to cook a new one by mixing ingredients taken from them. Rather, by writing the book we
 would have wished to find answers to our own questions  -- the present version is certainly
a biased choice of questions, 
the one of a mathematician, and Tobias is no longer here to correct and complete it by a physicist's 
view. I'm well aware that the text is tentative, piecemeal, and possibly may appear altogether beside the point. 
My only excuse is that, from a purely mathematical point of view, the ideas exposed in the following seem
natural, and are kind of unavoidable.
I cannot  claim that Tobias would have signed this text, but I'm  sure that in the universe where he is now,
he will forgive me 
for quoting his name in relation with ideas and speculations that, certainly, are not quite standard in our universe.

\section{Introduction}

\subsection{Quantum Mechanics: axioms {\sl versus} interpretations}
Whereas the {\em interpretation} of Quantum Mechanics is a hot topic -- there are at least 15 different
\href{https://en.wikipedia.org/wiki/Interpretations_of_quantum_mechanics}{mainstream interpretations}\footnote{hyperlinks are in grey in the electronic version of this text}, an unknown
number of 
\href{https://en.wikipedia.org/wiki/Minority_interpretations_of_quantum_mechanics}{other interpretations},
and thousands of pages of discussion --, it seems
that the {\em mathematical axioms} of Quantum Mechanics are much less controversial: the
\href{https://en.wikipedia.org/wiki/Dirac%E2%80%93von_Neumann_axioms}{Dirac-von Neumann axioms}
are generally accepted to be their definite version
(\cite{D, vN}). 
Although I find exiting and interesting the discussion on ``interpretations'', I do not feel qualified to contribute to it.
As a mathematician I feel more competent to comment on the axiomatic and formal structure of quantum mechanics:
without being irrespectful towards Dirac and von Neumann, I find surprising that the ``definite'' form of the axioms has been
fixed  85 years ago,  shortly after the main discoveries of quantum theory had been made,
and that since then essentially nothing has been changed.
 The whole discussion seems to turn around the ``interpretation'' of a  theory whose formal mathematical structure
 is defined once and for all,
 without taking seriously into consideration that the axiomatic foundations may be questionable. 
This calls for comparison with the history of  
\href{https://en.wikipedia.org/wiki/Foundations_of_geometry}{\em axioms of geometry}:
Euclide's axiomatic construction
 of geometry is certainly among the greatest achievements of the human mind in ancient history;
however, sticking to the axioms too closely prevented men for a long time
 from discovering non-Euclidean geometry. The rapid development of
modern mathematics was possible only after mathematicians had questioned the structure of Euclide's axioms.
Could   something similar occur with the axioms of quantum theory? 
I think this possibility cannot be completely excluded. 

\ssk
Of course, I neither claim that the Dirac-von Neumann axioms were ``wrong'' (they can be no more ``wrong'' than Euclide's),
nor  to have a full-fledged
counter-proposition, like Hilbert had when proposing his
``\href{https://en.wikipedia.org/wiki/Hilbert%27s_axioms}{Grundlagen der Geometrie}'',
putting Euclide's axioms onto a rigorous and modern base. 
More modestly, I just want to point out 
 that such possibilities may indeed exist, by presenting some tentative framework;
it is then a matter of discussion between physicists and mathematicians to judge whether this deserves to be
investigated further, 
and if so,  to
improve it and  leading by iteration to a
kind of optimal version, hopefully in less time then it took to progress from Euclide's to Hilbert's vision of geometry. 

\ssk
In a nutshell, my proposition is to ``complete quantum theory'': since its present form is a {\em linear} theory, it calls for
completion by some non-linear space, just like Euclidean geometry calls for completion by {\em projective spaces}. 
This proposition is presented in Section \ref{sec:main}. 
Before presenting it, some more preliminary remarks.

\subsection{The universe of mathematics, and the mathematical universe}
Tobias was not the first and not the last to put forward the idea that ``physics {\em is} mathematics'' (I remember him
exposing this idea
 to me on a paper napkin in G\"ottingen): Roland Omn\`es discussed such kind of idea in his
book ``Converging Realities'' \cite{O2}, saying:
{\em I suggest the name ``physism'' for the philosophical proposal that considers the foundations of mathematics as
belonging to the laws of nature.}
More recently, this idea has been advanced by
Max Tegmark  (\cite{Teg}), who calls it the
\href{https://en.wikipedia.org/wiki/Mathematical_universe_hypothesis}{\em Mathematical Universe Hypothesis (MUH)},
that is: {\em Our external physical reality is a mathematical structure}, and (loc.\ cit., p. 357):
{\em The MUH implies that mathematical existence equals physical existence.}
As a mathematician, I feel quite  happy with this, and I like to take it as a heuristic principle, that is, as a
 welcome source of inspiration.
The MUH suggests that physicists and mathematicians approach the same thing from different sides:
physicists may call it the  ``Mathematical Universe'', and mathematicians may call it the ``Universe of Mathematics''.
Seen from the mathematician's side,
 the axioms of quantum theory are part of the universe of mathematics, and finding their ``optimal'' form is not
so much a matter of expedience, 
but rather an intrinsic mathematical question, whose
importance is comparable to the one of the foundations of geometry.  
Indeed, my feeling is that these two questions are much more deeply related to each other than visible at present.

\subsection{Form and content} 
Mathematicians tend to focus on the {\em formal} structure of the universe, 
on structures and relations, whatever their
``meaning'' or ``content'' may be.  As Hilbert put it once, referring to his
``Grundlagen der Geometrie'' (\cite{R}, p. 57):
{\em ``One must be able to say at all times -- instead of points, straight lines, and planes -- tables, chairs, and beer mugs.''}
Von Neumann (following the G\"ottingen spirit)  defined the fundamental notions of quantum theory,
{\em state} and {\em observable}, in a purely formal way as {\em rays in a Hilbert space} (``table''), respectively as
{\em self-adjoint operator in a Hilbert space} (``beer mug''). Thus,
passing  from the  ``classical''  to the  ``quantum world''  is often presented by the following
schema:

\msk
\begin{center}
\begin{tabular}{|l|c|c |}
  \hline
   & classical  & quantum \\
  \hline
 state  & point (element of a set)  & ray in Hilbert space  \\
 \hline
 observable  & (real) function on the point set & self-adjoint operator  \\
  \hline
\end{tabular}
\end{center}
\msk

\nin
Another version of this schema, in terms of 
\href{https://en.wikipedia.org/wiki/C*-algebra#C.2A-algebras_and_quantum_field_theory}{$C^*$-algebras}, reads as follows:

\ssk
\begin{center}
\begin{tabular}{|l|c|c |}
  \hline
   & classical  & quantum \\
  \hline
 state  & point  &  normed positive functional on a $C^*$-algebra   \\
 \hline
 observable  & (real) function  & Hermitian element of a $C^*$-algebra  \\
  \hline
\end{tabular}
\end{center}
\msk

\nin
This pattern is clean and neat,  and there exist many excellent textbooks unfolding it in
detail, both from the point of view of mathematics and of physics. As already said above, it is not our aim 
to reproduce them.

\subsection{Plan} 
The pattern presented above looks clean and neat,
but it is unsatisfying if you want to understand the ``structure of the mathematical
universe''  -- 
 fundamental notions are defined via a {\em construction} (``take a Hilbert space or a $C^*$-algebra, 
  and do this and that...''),
and not via intrinsic properties and relations. 
In Section \ref{sec:class2q}, we develop this critizism in more detail, and then 
present ingredients that might permit to formulate other axiomatics (main sections: 
\ref{sec:class-revisted} and \ref{sec:main}), essentially equivalent to
the Dirac-von Neumann axioms, but 
opening a window towards possible
new developments, by  indicating what structure could be omitted or altered when wishing to start a trip into
``non Dirac-von Neumannian quantum mechanics''. 
The present text deals with the ``general language'' of quantum theory, whose main vocabulary is
``state'' and ``observable''. In the subsequent second part, I will try to include {\em dynamics} into this
theory (unitary time evolution). 

\subsection{Geometry of quantum theory -- history}\label{sec:history}
Before starting the mathematical discussion, let me very briefly sketch the history of our topic, from my own
(admittedly subjective) viewpoint. 
Reading letters and texts by von Neumann (\cite{Re, V}), I have the impression that nobody shared the dissatisfaction 
with his schema more than he himself.  One the one hand, 
 together with Jordan and Wigner, he investigated the possibility of constructing
quantum mechanics by using only the ``algebra of self-adjoint operators''  -- which is not an {\em associative}
algebra, but (as we say nowadays) a {\em Jordan algebra}, with the symmetrized product
$a \bullet b := \frac{ab+ba}{2}$. I have been interested myself in the mathematical theory of
 Jordan algebras for a long time, and much of what
follows is motivated by this research. 

\msk
On the other hand,  von Neumann writes in a letter to Garrett Birkhoff, in 1935 (\cite{Re}, p. 59):
{\em I would like to make a confession which may seem immoral: I do not believe absolutely in Hilbert space any more.}
He then attacks, together with Birkhoff, his deep and beautiful work on the lattice theoretic approach, completed later
by contributions of
other outstanding mathematicians, and
presented  in lectures by George Mackey giving rise to the monograph \cite{V}. This monumental work
 is a major step in understanding mathematical structures
underlying quantum mechanics, and it answers in many respects the critisizm that I shall formulate below
  ({\it cf.}\ in particular the long notes to Chapter
IV in \cite{V}, and \cite{L}).  

\msk
However, reading \cite{V}, one ends up with the impression that the effect of this huge work is only to justify exactly the
Dirac-von Neumann axioms as given before: we gain the satisfaction that they can be deduced from more
general and more abstract principles. 
But nothing more --
there seems to be no ``window'' that could be opened, comparably to opening Euclidean geometry towards
non-Euclidean ones. 
Possibly, this feeling guided another generation of theoretical physicists, 
Aerts and his school on the one hand (\cite{A99}, where the term ``completed quantum mechanics'' is used in a sense
different from ours, and \cite{A09}), and on the other, Kibble, 
followed by Ashtekar and Shilling, and by
Cirelli, Gatti, and Mani\`a, and others (cf.\ references in \cite{Be08a, Be08b}),
 who instead of lattice theory used (infinite dimensional) differential geometry to investigate the
geometry of the ``state manifold'', the projective Hilbert space $\PP (H)$. 
This so-called ``delinearization program'' has
also influenced my own approach \cite{Be08a, Be08b}, on which the present text is based.
As far as I see, all of these authors pleading for a ``geometric approach'' to quantum theory 
have common aims and motivations, clearly formulated in \cite{CGM}:
{\em ``The delinearization program, by itself, is not related in our opinion to attemps to construct a non-linear extension of QM with operators that act non-linearly on the Hilbert space H. The true aim of the delinearization program is to free the mathematical foundations of QM from any reference to linear structure and to linear operators. It appears very gratifying to be aware of how naturally geometric concepts describe the more relevant aspects of ordinary QM, suggesting that the geometric approach could be very useful also in solving open problems in Quantum Theories.''
}

%

 \section{From classical to quantum}\label{sec:class2q}

Without going too much into details, here is what I would like to say as a mathematician, or as a ``geometrician'', 
about the basic pattern presented above.

\subsection{The ``classical side''}
Classical geometry deals with sets, say $M$, carrying additional structure having a ``geometric flavor''
(such as: manifold, symplectic or Lorentzian structure, and so on). 
The most elementary actors are the {\em points} of $M$, $p\in M$, which we call also {\em pure states}.
Note that there is no ``distinguished'' point in $M$, no ``origin''. 
However, one may object that ``points'' often appear to be a fiction, since they have no extension at all; it would be more
realistic to replace points by {\em probability measures $\mu$}, also called {\em mixed states}, 
on $M$. Then it would be a matter of
convenience to describe the correct topological, or measure-theoretic properties that one likes to impose.
Anyhow, points $p$ may be identified with the corresponding point-mass, or Dirac measure, $\delta_p$, and finite
convex combinations of Dirac measures represent mixed states coming from a finite number of pure states.

\ssk
An {\em observable} is a real-valued function $f:M \to \R$ (in presence of additional structure, usually 
 assumed to be continuous, or measurable, or smooth, and so on).
Denote by $F(M)$ or $F(M,\R)$ your space of observables (say, for the moment, the space of all real valued functions); then this space
carries a rich structure: it is a {\em vector space}, by pointwise addition and multiplication by scalars, and a
{\em commutative algebra}, by pointwise multiplication of functions, and there is a {\em partial order}: we may speak
of {\em positive functions}.
Note that all of these structures simply come
from the corresponding ones of real numbers $\R$, since everything is defined pointwise. 
You just loose two things: $\R$ is a {\em field}, but $F(M,\R)$ is not (it's just a {\em (commutative) ring}), 
and the order on $\R$ is {\em total}, but the one on
$F(M,\R)$ is not (it's only  {\em partial}). 

\ssk
Next, states and observables naturally are {\em in duality} with each other: an observable $f$ can be 
{\em evaluated} at a point $p$, just by taking the value $f(p)$.
If we work with mixed states (measures $\mu$), the same holds: if you take the view of defining a measure $\mu$
as a certain linear form on $F(M,\R)$, then the value is denoted by $\mu(f)$; if you use classical
measure theory, you will rather write  $\int_M f d\mu$, but in the end this amounts to the same.
In this context, $f$ may be called a {\em random variable}, and the value $\mu(f)$ is its {\em expectation value}.
When $\mu$ is a Dirac measure $\delta_p$, then this value is always ``sharp'' (there is no variance), but in general we
have to use the language of {\em probability theory}, as usual, e.g.,  in classical statistical mechanics.
On a conceptual level, already at this point a serious problem becomes visible: 
the ``problem of infinities'' -- 
certain measures attribute to certain
functions the value $\infty$ (which is {\em not} a real number), or no value at all.

\ssk
This is the basic set-up; much more could be said, and according to what you focus on, your theory will take
different shapes. For instance, noticing that evaluation at a {\em pure} state is an {\em algebra morphism}
$F(M,\R) \to \R$, you will be interested in {\em kernels} of the point evaluations, which are certain {\em ideals} of the algebra;
pursuing this (and replacing $\R$ by $\mathbb C$ or other fields), 
you are lead towards formalisms used in algebraic geometry. 
On the other hand, keeping to  real numbers, and
noticing that measures are {\em positive linear forms} on $F(M)$, you are lead to look 
at the vector space $S(M)$ of {\em signed measures}, which is a subspace of the dual vector space $F(M)^*$,
and to realize that the Dirac measures are {\em extremal points of the convex cone of positive
functionals}. This leads to  duality of topological vector spaces, order structures, and to functional and convex analysis. 
Both viewpoints are extremely important in modern mathematics.

\subsection{The ``quantum side'', and the ``superposition principle''}
Concerning the ``quantum side'', 
Varadarajan opens his book \cite{V}  by the phrase:
{\em As laid down by Dirac in his great classic \cite{D}, the principle of superposition of states is the fundamental concept on which  quantum theory is to be erected. }
It is not easy to find a clear explanation of what this principle means -- Dirac himself writes (in \cite{D} p.\ 15):
{\em The superposition process is a kind of additive process and implies that
states can in some way be added.}
Transposed to the classical picture drawn above, this would mean that we could  ``add'' two pure states (points), and
the result would be another pure state (point): that is, the manifold $M$ would be something like a linear space, with ``addition'' map
assigning to a pair of points a third one. 
Thus, the passage from classical to quantum would  resemble a procedure imposing some additional structure on
the pure state space, turning it into something similar to a linear (=vector) space. 
This is not too far from a valid formal definition --
today, we say much shorter: {\em a pure state is a ray in a (complex) Hilbert space $H$}, 
so that the {\em set of pure states} is nothing but
the {\em projective space} $\PP(H)$ associated to $H$.
Indeed, elements of a projective space cannot simply be ``added'', but projective spaces do bear certain relations 
with linear spaces, and ``superposition'' refers to reminiscence of this kind of linearity in quantum theory.
For instance, two different points in a projective space define a unique {\em projective line} joining them, which is 
the set of superpositions of these two points (but the parametrisation of this line is not unique). 
Summing up, the state manifold $M$ becomes, on the quantum side, not quite a flat, linear space, but something 
related, a (complex) projective space $\PP(H)$. 
{\em Projective geometry} thus becomes part of quantum theory. 
This observation has triggered the geometric approaches to quantum mechanics  mentioned above (subsection \ref{sec:history}).

\ssk
This apparently clear geometric picture suffers a setback when we wish to extend it to {\em mixed states}: as
on the classical side, one can speak of ``mixed states'', again defined  as formal convex combinations of pure
states. However, one now must take care not to confuse  such a formal convex combination with the superposition defined by
the same coefficients! 
 It is not clear what kind of ``geometric object'' the set of these general states then is: it is not a projective space, but still
one would expect it to remember somehow the ``superposition principle'', that is, to
 be some kind of geometry sharing properties with projective geometries -- some kind of
``generalized projective geometry''.  Indeed, here we are lead to intrinsically mathematical questions concerning the structure of
the Universe of Mathematics -- and related to my own research.\footnote{ I have called, in \cite{Be02},
 ``generalized projective geometries''
the precursors of the ``Jordan geometries'' from \cite{Be14}. The approach is quite different from the lattice theoretic one
developed in \cite{V}, cf.\  subsections \ref{sec:history} and \ref{ssec:vN}.}

\ssk
Back to the quantum side, let's now discuss the  {\em observables}: 
in the basic scheme, observables are represented by {\em self-adjoint operators on the Hilbert space $H$} (in general, 
\href{https://en.wikipedia.org/wiki/Unbounded_operator}{unbounded operators}
 --  but let us, for the moment, prescind from this).
States and observables are related with each other by a kind of {\em duality}, which in contrast to the classical case is now
of {\em quantum probabilistic nature}:
instead of a sharp ``value of the observable $A$ in the pure state $\psi$'', we just can speak of its
{\em expectation value}, which is the number given by the formula (where $\langle u,v\rangle$ is the scalar product in $H$)
\begin{equation}\label{eqn:exp1}
\langle A \rangle_\psi = \frac{ \langle \psi, A \psi \rangle}{\langle \psi, \psi  \rangle } \, ,
\end{equation}
or, more generally, of the {\em probability distribution} of the values, including the 
second, $\langle A^2 \rangle_\psi  - \langle A \rangle_\psi ^2$, and
higher 
\href{https://en.wikipedia.org/wiki/Moment_%28mathematics%29}{moments}.
Although the expression $\langle A \rangle_\psi$ looks more complicated than the classical $f(p)$, it still is additive
in $A$, so that observables, just as in the classical case, form a vector space, with the usual operator sum being
the same as ``pointwise sum''. However, the formula is not ``multiplicative'' (i.e., not compatible with the composition of operators).
The sum of operators thus seems to be the clear analog of the sum of functions from the classical case, whereas a clear 
interpretation of the product gets lost. 
The formula for $\langle A \rangle_\psi$ is apparently not ``linear'' in the variable $\psi$; all the more it is remarkable that
the operator $A$ itself acts linearly on $\psi$ -- the (complex)
 linearity of $A$  is the surprising feature of quantum theory, and indeed it
is the mathematical core of  the ``principle of superposition''.  
Whereas on the classical side there is {\em just one} source of linearity, on the quantum side
there seem to be {\em two} such sources, one on the level of
observables, the other on the level of states,  which somehow appear to be
 compatible with each other.
The precise formulation 
of this compatibility condition is  rather subtle --  there
are at least two ways to formulate it, corresponding to the two ways of presenting the classical-quantum scheme given above,
but in either way there is no such thing as ``superposition of mixed states'' (only of pure ones).

\ssk
The first way is by identifying a mixed state $W$, formal convex combination of orthonormal 
pure states $\psi_i$ weighted by scalars $w_i\in
[0,1]$ such that $\sum_i w_i =1$, with the corresponding ``density matrix'', the operator represented by the diagonal matrix
given by the $w_i$ with respect to the $\psi_i$.
Then the expectation value of $A$ in the mixed state $W$ is given by
\begin{equation}\label{eqn:exp2}
\langle A \rangle_W = \mbox{trace}(WA) .
\end{equation}
This formula is {\em linear in $W$}, and even {\em bilinear in $(A,W)$}. 
However, because of the normalizations, the density matrices do not form a linear space, but just a convex set, so the
term ``bilinear''  has to be taken with some care.

\ssk
The second way of interpreting these things, also going back to von Neumann, is to forget the Hilbert space $H$ and
to express everything in terms of the algebra $\bA$ of (say, bounded) operators on $H$, and in a next step 
taking for $\bA$ more general types of associative algebras.
Technically, one usually requires that $\bA$ be a 
\href{https://en.wikipedia.org/wiki/C*-algebra}{\em $C^*$-algebra}. Most importantly, this means that $\bA$
carries an {\em involution}, the {\em adjoint map} $a \mapsto a^*$, and the observables then
are the fixed points of this involution ($a^* = a$,  self-adjoint elements).
The notion of state now becomes a derived notion: a state is a
{\em  normalized positive linear functional} $\mu : \bA \to {\mathbb C}$, that is, a $\mathbb C$-linear map such that
$\mu(a^* a)$ is real and non-negative, for all $a \in \bA$, and $\mu({\bf 1})=1$, where $\bf 1$ is the unit element
of $\bA$. 
The states form a convex cone, and the pure states are the extremal elements of this cone. 
In this picture, the analogy with the classical picture appears very clearly: 
the commutative algebra $F(M)$ is replaced by the (complex) non-commutative algebra $\bA$, 
states are in both cases certain positive linear functionals, so there is a natural duality with the observables
(interpreted in the classical case as exact value, and in the quantum case as expectation value, which is not sharp
in general, even if the state is pure).
Summing up, the following table  refines the schema given in the preceding section:
\begin{center}
\begin{tabular}{|l|c|| c |c |}
  \hline
   & classical  & $\begin{matrix} \mbox{quantum:}\\  \mbox{Hilbert space } H \end{matrix}$ & 
   $\begin{matrix} \mbox{quantum:}\\  C^*\mbox{-algebra } \bA \end{matrix}$
    \\
  \hline
 pure state  & point $p\in M$  & ray $[\psi]$  in $H$ &  extremal  \\
 (mixed) state & probability measure $\mu$ & density matrix $W$ & 
$ \begin{matrix} \mu:\bA \to {\mathbb C}\\ 
 \mbox{normed, positive} \end{matrix}$   \\
 \hline 
 observable  & function $f:M\to \R$ & self-adjoint operator $A$ & $a\in \bA$ with $a^*=a$ \\
  \hline
 duality & $f(p)$, $\mu(f)$ & $\langle A \rangle_\psi$,  $\langle A \rangle_W$ 
   & $\mu(a)$ \\
 \hline 
\end{tabular}
\end{center}

\msk
\nin
This axiomatic framework proved to be powerful and robust since over 80 years. 
Why put it in question?

\subsection{Questions}
Stepping back and looking at the axioms from the point of view of the ``universe of mathematics'', 
reasons of dissatisfaction may be:

\begin{enumerate}
\item
Key notions (state, observable) are not introduced as
primitive objects, but are
  defined by a construction  using something else (Hilbert space $H$,
$C^*$-algebra $ \bA$).  In other words,
basic objects used in these constructions (vectors $\psi \in H$, elements $a \in \bA$)
appear to be auxiliary: they
 do not have a physical interpretation 
(only the rays $[\psi]$, resp.\ the elements with $a^*=a$, do).
\item
The axioms are ``ungeometric'': this seems unavoidable when we define objects {\em by their 
construction}, und not by properties and relations. For instance,
 the linear structure of $H$, resp.\ the bilinear product of $\bA$ are imposed by
decree, and the  ``superposition principle''  comes out of this construction in a fairly indirect way.
\item
Imposing by decree a linear structure also implies postulating the existence of an {\em origin}
($A=0$, resp.\ $a=0$), 
 and of a {\em unit}
($A=I$, the identity operator, resp.\  $a={\bf 1}$, the unit of the algebra). 
These very distinguished ``observables''  do not look ``physical'', but rather seem to reflect 
some kind of ``convention''. What is their true status?
\end{enumerate}

\nin
These three items are interwoven with each other. In the following, I shall use Item (3) as
 ``line of attack''. 

\subsubsection{On the classical side}
In the classical schema,
clearly the two observables  $f=0$ and $f=1$ (constant functions) play a very special r\^ole, and one may doubt if
they deserve to be called ``observable'': the function $f=0$ {\em defines} the ``origin'' and the function $f=1$ {\em defines} the
``unit'' with respect to which all other functions,  and hence all ``measurements'', are expressed. 
On the physics side, this raises deep questions about choices of unit systems,
about existence of absolute zeros, and whether these values and 
choices are constant all the time, whether they are
 ``canonical'' or peculiar to our particular place in the multiverse, and so on.
 Since these questions touch foundational issues of physics, I think that 
on the maths side, too, we should take them seriously. 
Thus, the least to say is  that these two functions don't look like
 observables of the same sort as the others: their status seems to be different. 
 
\ssk
Related to this item, a mathematician may regret that the duality between observables and states is somewhat unperfect:
there are constant functions, but no ``constant states''; there are pure states, but the ``pure observables'' (functions taking value
$1$ at one point and $0$ else) play no r\^ole. Of course, there are {\em analytic} reasons for this: the ``pure observables'' are not
continuous and would be of measure zero -- at least, in all ``continuous models''. 
Likewise, the ``yes-no questions'' (indicator functions $1_A$) are not continuous, hence out of scope of continuous
models: are they ``observables'', or ``states''? 
Taking continuous models  to be the only game in town 
excludes  from the outset to encompass discrete models (see below, \ref{ssec:contvsdiscrete}).

 \subsubsection{On the quantum side}
The ``scaling problem'', or ``problem of the linear structure'', gets more involved on the quantum side than
in the classical case, because of the double origin of its linear structure, see above.
The ``observables'' $0$ and $1$ (the identity operator) play a very distinguished role also on the
quantum side: again, they seem not to be ``observables like the others''.
In \href{https://en.wikipedia.org/wiki/Quantum_logic}{\em quantum logic}, they represent truth values
``always false'' and ``always true'' -- which clearly is a rather particular status, hardly comparable with
observables like position or momentum.

\subsection{Continuous versus discrete}\label{ssec:contvsdiscrete}
Although this is not the main topic of the present work, let me say some words on this item, questioning the classical
pattern. 
The question whether the universe should be seen as a ``continuum'', or as a ``granular 
({\em discrete}) structure''   is fundamental for choosing our mathematical model. The discussion already lasts  over
2000 years, and sometimes one opinion prevailed, and sometimes the other: 
the hypothesis of a ``granular'' structure is attributed to Democritus;
in ``classical physics'', the universe appeared to be continuous;  nowadays, in the ``quantum era'', it  appears to 
be discrete again, and maybe tomorrow, quantum-continuous...
For a mathematician, the lesson to be
drawn is that we should be ready to offer good models for {\em both} issues. 
But in practice we only have good models for the {\em continuum model}, and not for the discrete case (with its
``worst case'': the one of a {\em finite} set).
Indeed, in the continuous case, we have the whole  of {\em classical differential calculus} at our disposition,
which, combined with sophisticated functional analytic methods gives formidable strength to the approach of analyzing
a ``space'' $M$ via function algebras and measure theory on $M$. Its
power comes from {\em duality}, here: duality between function spaces and spaces of measures or distributions.
In the {\em finite} case, the distinction between functions and measures becomes more or less a fiction: both
  $F(M)$ and $S(M)$ then are the same as $\R[M]$, the vector space with base indexed by elements of $M$, which carries a
 canonical scalar product 
$\langle f,g \rangle = \sum_{p\in M} f_p  g_p$, so the same gadget may be considered as ``function'', or as
``signed measure'', whatever you prefer. 
This gives a purely algebraic model, which of course reflects nothing of the kind of properties
of ``infinitesimal calculus''.
In the general discrete case, similar remarks hold.
Thus, to get a sufficiently rich theory, we should ask:
{\em is there some  way to implement infinitesimal calculus in the discrete case?}
I  consider this to be a very interesting question -- already in the realm of purely ``classical'' mathematics! 
Giving a serious answer would take  too much place here.
I have been doing research in this domain for a certain time -- see \cite{Be17a} for an overview;
let me just say here that positive answers to this question do exist, and I believe they are relevant for the
structure of the Universe of Mathematics.


\section{The idea of ``completion''}\label{sec:idea}

We explain the basic idea of how to complete a linear theory.
In this chapter, we complete the space of
classical observables, and in the next, the one of quantum observables and states.

\subsection{Declaration of basic principles}
In my own research, I  regularly return to the ``problem of choice of units and origins'' (item (3) mentioned above).
Any ``geometric'' theory starting by attributing  a very special
r\^ole to  one, or two, or $n$, points, is flawed right from the beginning. 
I believe that somewhere in the constitution of the Universe of Mathematics it is written:  {\em All points are created equal}.
This axiom I hold for self-evident. It has corollaries, 
and to secure and formulate them it may be necessary to elect governments 
deriving their just powers from the consent of the governed. 
For instance,
students learn in linear algebra courses that one may need to fix bases in order to define matrices; but we should be able to change bases
(just like governements), or even to
get entirely rid of them.
Next, sometimes one wishes to get rid also of the origin of a vector space: this gives an
 {\em affine space}, which is nothing but a ``vector space with forgotten origin''.\footnote{ the idea is simple, but teaching
it to students is not --   the interested reader may
look \href{http://www.iecl.univ-lorraine.fr/~Wolfgang.Bertram/WB-affinespaces.pdf}{here}
 for some remarks on this...}
Likewise, in more abstract situations one may wish to get rid of ``neutral elements'' or ``zero points'' or ``unit elements'': 
this is exactly the approach I advocate also for the ``geometry of quantum mechanics''.
For symmetry reasons, in quantum mechanics this strategy must be applied both to observables and states, 
featuring the duality between them. In this respect, duality in quantum mechanics appears to be more perfect than
duality in the classical setting: the dual parts are of the same nature, they appear to be ``self-dual''.

\ssk
This proposition could be qualified ``conservative'':  no 
 fancy new gadgets are introduced, but  rather we renovate classical, if not old-fashioned, furnitures like affine spaces.
More specifically, we will follow an old route pointed by the observation that the geometry of an affine space inevitably
calls to be ``completed''  by adding ``points at infinity'', the so-called ``horizon'', to obtain
some kind of more symmetric space, the {\em projective space}. My proposition is to place
axiomatic quantum theory in the framework of a geometric space that ``completes'' an associative ($C^*$-) algebra,
or a Jordan algebra,
exactly like a projective space completes a usual vector of affine space.  
A good deal of my mathematical work has been devoted to such questions (see \cite{Be00, Be02, BeKi, Be08a, Be08b,
Be14, Be17b}), and to summarize, I can say that there is no mathematical obstruction to achieve this: such
geometries do exist, and moreover, nothing is lost by the procedure of ``completing''.
 What is gained? - 
On the maths side, a more homogeneous
 and more symmetric picture, allowing to look behind the horizon; on the physics side -- 
 I don't know; but by comparison with classical geometry, and by the philosophy
of the MUH, one may speculate that
the gain could  be non-zero.  Future may tell.

\subsection{The classical side revisted}\label{sec:class-revisted}
Let's start again by
looking at the classical pattern, and by investigating more closely the special r\^ole of elements
such as $1,0$ and $\infty$.
First,
the element $1 \in \R$ defines the ``canonical'' basis in $\R$ with respect to its ``canonical'' origin $0 \in \R$.
Let's forget the basis and look at $\R$ just as an abstract one-dimensional vector space: there is a distinguished
$0$, but no distinguished $1$.
Then the dual space is also a one dimensional vector space, but both spaces should be distinguished from each other: to remind this,
let us write $'\R$ for the ``original'' space, and $\R'$ for its ``dual'' space, and continue to write $\R$ for our old friend. 
When $v\in \, '\R$ and $\phi \in \R'$, let us write $\langle v,\phi \rangle = \phi(v)$ for the value taken by $\phi$ on $v$.
This defines a  {\em pairing}
$\R' \times \, '\R \to \R$, $(\phi,v)\mapsto \langle v, \phi \rangle$.  Expressions such as
$\langle v, \phi \rangle w$ are defined, and
since we know that this is just another way to write the product $v\phi w$, we also know how to deal with brackets in such
itereated products.\footnote{ We discover, then, that the {\em inverse} $r^{-1}$ of a non-zero element
$r \in \, '\R$ belongs to $\R'$: it is the unique element such that $\langle r,r^{-1}\rangle = 1$. }
Summing up, forgetting the element $1$ emphasizes the r\^ole of {\em duality}: it
forces us to distinguish $'\R$ and $\R'$.
On the level of functions, 
we now have so speak of ``original functions'' $f:M\to \,  '\R$, and of ``dual functions'' $\phi:M\to \R'$, giving rise by
paring to a function $\langle \phi, f \rangle : M \to \R$.
For the moment, let us think of both kinds of functions as  ``observables''.
However, since obviously
$F(M, \, '\R)$ and $F(M,\R')$ are sort of dual to each other (indeed,
injecting a bit of language introduced in Appendix \ref{app:AP}, 
the pair $(F(M, \, '\R),F(M,\R'))$  is an archetypical
example of an {\em associative pair} and of a {\em Jordan pair}),
we will have to ask ourselves if it wouldn't be more appropriate to identify one of the two spaces rather with some kind of
``space of states''. 

\ssk
In a second step, let us forget both elements $0$ and $1$: this means to consider $\R$ just as an
{\em affine line}, that is, the {\em one-dimensional affine space}, $\R^\af$. By picking up any two distinct
points $a,b \in \R^\af$, we can identify $\R^\af$ with $\R$ such that $a$ corresponds to $0$ and $b$ to $1$:
namely, $r \in \R$ corresponds to the ``barycenter''
\begin{equation}\label{eqn:aa}
c = (1-r)a+rb \in \R^\af .
\end{equation}
Conversely, given a triple $(a,b,c)$ of points in $\R^\af$, we recover $r$ as {\em  (division) ratio}
\begin{equation}\label{eqn:ab}
r = \frac{c-a}{b-a} =: R(c,b,a) \in \R ,
\end{equation}
and $c=R(c,1,0)$.
The ratio is an \href{https://en.wikipedia.org/wiki/Invariant_(mathematics)}{\em invariant} of affine geometry.
In the same way,
our space of observables $F(M,\R)$ is turned into an affine space $F(M,\R^\af)$ by forgetting the functions 
$0$ and $1$.  Relations (\ref{eqn:aa}) and (\ref{eqn:ab}) remain valid, pointwise:
 any two functions $f_0,f_1:M \to \R^\af$,  taking different values at each point,
can take the roles of ``origin'' and ``unit''. Instead of $f \in F(M,\R)$, we  now consider the 
 triple $F= (f,f_1,f_0)$ as ``observable''. The number describing the observable $F$
in the pure state $p$ is
the ratio  $R(f(p),f_1(p),f_0(p))$, that is,  we  define  the ``value of $F$ at $p$'' by
\begin{equation}
F(p)  :=  R(f(p),f_1(p),f_0(p)) =
\frac{f(p) - f_0(p)}{f_1(p)-f_0(p)}.
\end{equation}
This formulation ensures that, even if values and choices of units and origins throughout the multiverse are 
uncommitted, the {\em mathematical form of laws} has a common description (as long 
as ratios are accepted as physical meaningful -- which is possibly the oldest idea of exact science).\footnote{
One may object that $f=0$ is distinguished by being a {\em constant} function, whereas $f_0$ will in general not be
constant.  But my point is precisely that this distinction rather reflects a convention, and not a ``fact of nature'':
to take account of this, any choice of $f_0$ also defines a modified {\em action of the group of bijections}, or
of diffeomorphisms or whatever, on functions, such that $f_0$ becomes invariant, i.e., ``constant'', under this action.
More generally, the notion of ``constant section'' of a vector bundle is not absolute, but depends on additional
structure, such as, e.g., affine connections.}
 
\ssk
In the universe of mathematics, these first two steps force us to make a third one: the duality aspect from the first step
has been lost in the second, and only by going to the {\em projective line} we can harvest the benefits of both steps
together -- indeed the
\href{https://en.wikipedia.org/wiki/Duality_%28projective_geometry%29}{\em duality principle of projective geometry}
is one of the highlights of classical geometry, and it makes projective geometry clearly superiour to affine geometry. 
So, instead at $F(M,\R^\af)$, let us look at the space $F(M,\R \PP^1)$ of functions $f:M\to \R \PP^1$ with values
in the  \href{https://en.wikipedia.org/wiki/Real_projective_line}{\em real projective line}  $\R \PP^1$.
For the present purposes, it will be sufficient to define $\R \PP^1$ 
simply as the ``one-point compactification'' 
$\R \cup \{ \infty \}$ of $\R$, by adding a single ``point at infinity''
 (topologically, $\R \PP^1$ is a circle, but for the moment we are not interested in topology).
We will use  two basic facts about projective geometry (see Appendix \ref{app:P} for some mathematical explanations): 
\begin{enumerate}
\item
removing an arbitrary hyperplane $H$ from a projective space $X$, an affine space $X \setminus H$ remains
(cf.\ Theorem \ref{th:P1}); in our case:
removing an arbitrary point $a$ from $\R \PP^1$, an affine line over $\R$ remains, denote it by
$U_a = \R \PP^1 \setminus \{ a \}$;
\item
picking up {\em two} different points $a,b$ in the projective line, 
$b$ may serve as origin in the affine space $U_a$, and $b$ may serve as origin in  $U_a$:
thus we have two (one-dimensional) vector spaces $(U_a,U_b)$. Now, {\em these two vector spaces 
are dual to each other} (see subsection \ref{ssec:P3}).
In other words, for any such choice of $(a,b)$, the pair $(U_a,U_b)$ is a ``model'' for $(' \R, \R')$.
\end{enumerate} 
Both (1) and (2) can be used to associate to a quadruple $(a,b,c,d)$ of elements  of $\R \PP^1$ a scalar in $\R$:
the ratio of $(a,b,c)$ in $U_d$, and the duality $\langle a,b\rangle$ in $(U_c,U_d)$.
It is remarkable, then, that both procedures give the {\em same} number, namely the 
famous 
\href{https://en.wikipedia.org/wiki/Cross-ratio}{\em cross-ratio} of the four values (see Appendix \ref{ssec:CR}): 
\begin{equation}\label{eqn:CR}
\CR(a,b ; c,d) =
\frac{(c-a)(d-b)}{(c-b)(d-a)} = 
 \frac{c-a}{c-b} : \frac{d-a}{d-b} = \frac{R(a,b,c)}{R(a,b,d) } \, .
\end{equation}
(To memorize notation: of the six possible differences, only $a-b$ and $c-d$ do {\em not} appear; the semicolon
reminds this.) 
The cross-ratio is a rich and subtle projective invariant. It contains all information given by
 preceding constructions since
\begin{equation}\label{eqn:CR2}
R(a,b,c) = \CR(a,b;c,\infty) ,\quad
\frac{a}{b} = \CR(a,b;0,\infty), \quad
a = \CR(a,1;0,\infty) .
\end{equation}
Again, the construction can be carried out pointwise:   by (1),
a completely arbitrary function $h$, or $f_\infty:M \to \R \PP^1$ can serve as ``horizon function'', or ``infinity function'': the set
of all functions never taking the same values as $h$ forms an affine space, another copy of our $F(M,\R^\af)$.
When $h$ is the function $f(x)=\infty$ for all $x \in M$, then we get back our old ``standard realization''; but now we
have also infinitely many other choices. 
Picking up {\em three} arbitrary functions, denoted by $f_0,f_1,f_\infty:M\to \R\PP^1$,
subject to the condition that at any point they take pairwise different values, we  use
them as ``reference triple'':
$f_1(p)$ as unit and $f_0(p)$ as origin in the affine space $U_{f_\infty(p)}$;  thus given any function
$f:M\to \R\PP^1$, we may define its ``value at $p$'' as the well-defined real number with respect to this reference 
triple, given  by the pointwise cross-ratio of the quadruple
$F = (f,f_1,f_0,f_\infty)$: 
\begin{equation}\label{eqn:CR2}
\langle F,p \rangle := \CR \bigl(f(p),f_1(p);f_0(p),f_\infty(p) \bigr) =
\frac{ f(p) - f_0(p)}{f_1(p) -f_0(p)} : \frac{f_\infty(p) - f(p)}{f_\infty(p) - f_1(p)} .
\end{equation}
But, as said above, the same formula also realizes item (2)!
It represents the bilinear pairing $F(M,\, '\R) \times F(M,\R')\to F(M,\R)$ when $(f_0,f_\infty)$ is fixed, and
in this context rather should be read 
\begin{equation}\label{eqn:CR3}
(f,g) \mapsto  \CR \bigl(f(p),g(p);f_0(p),f_\infty(p) \bigr) =
\frac{ f(p) - f_0(p)}{g(p) -f_0(p)} : \frac{f_\infty(p) - f(p)}{f_\infty(p) - g(p)} 
\end{equation}
Thus the pair $(g,f_\infty)$ represents an object ``dual'' to $(f,f_0)$, where the duality is given by the functional
``pointwise cross-ratio''. This suggests a shift in the understanding of the notion of ``state'': 
a state rather is a pair of dual functions, and an observable a pair of ``original'' functions, and this suggests
 to write
the whole gadget $F = (f,g;f_0,f_\infty)$ (call it ``obstate'')  rather  as a matrix
\begin{equation}\label{eq:matrix1}
F = \begin{pmatrix}  f & f_0\\ g & g_\infty \end{pmatrix} \, .
\end{equation}
Its  first row represents the ``observable aspect'', and the second row the ``state aspect'';  the second column
represents the ``reference system aspect'', and the first column its ``objective aspect''.
The cross-ratio
is invariant under exchange of rows, or of columns, and exchanging $f$ and $g$
(or $f_0$ and $g_\infty$) yields the inverse value. 
This shift of understanding furnishes  a robust concept of ``duality'' and of ``self-duality'', and it allows to separate this
from the
thorny problem of extracing a {\em scalar valued} pairing (using traces, integrals, measures -- see Appendix \ref{app:TDI}).

\ssk
Summing up, classical mathematics, and classical mechanics and other classical theories, 
could equally well be described by
replacing real valued functions by quadruples of $\R\PP^1$-valued functions, and by working with cross-ratios  instead of
values of single functions. Of course, this looks heavy and unnessarily complicated. And indeed, so it is, as long as
origins, units and infinities are considered to be fixed once and for all.
In classical mathematics, this assumption may seem reasonable;
but even then it might be interesting to pursue this idea since it opens new views on certain fundamental issues. 
With this perspective in mind, we mention that a further
property of $\R$ generalizes rather nicely to $\R \PP^1$: 
 the {\em order relation} of $\R$ gives rise to a 
 \href{https://en.wikipedia.org/wiki/Cyclic_order}{\em cyclic order} on $\R\PP^1$. 
Namely,
\begin{itemize}
\item
on the linear space $\R$, the order is 
  given by a {\em unary relation}: $0<x$,
 \item  
  on the affine space $\R^\af$, it is given by a {\em binary relation}, $x<y$, as usual,
\item
 on the projective space $\R \PP^1$, it is given by a {\em ternary relation}: 
a triple $(a,b,c)$ is {\em cyclically ordered} if $a<b$ in the affine space $U_c$.\footnote{Put differently and more formally:
the group $\PP \Gl^+(\R^2)$ has two open orbits in $(\R \PP^1)^3$:  one of them is the set of cyclically ordered triples.}
We then write $b \in ]a,c[$, thus defining {\em intervals} on $\R \PP^1$.
\end{itemize} 
\nin
Again, for functions, things carry over pointwise: what we get is a {\em partial cyclic order on the space of functions from
$M$ to $\R \PP^1$}.  The set of positive functions is generalised by intervals of this partial cyclic order
 (see \cite{Be17b} for more on this).  
Cyclic order and cross-ratio are related with each other: $\CR(a,b;c,d)$ is negative iff
$c$ lies in $]a,b[$ and $d$ in $]b,a[$, or vice versa, i.e., if  
the pair $(c,d)$ ``separates'' $(a,b)$. Some geometers (e.g., Coxeter) chose this separation relation as belonging to the structures
appearing in axiomatic foundations of geometry.

\subsection{From real to complex}\label{ssec:R2C}
Since quantum mechanics requires {\em complex} Hilbert spaces, and {\em complex} $*$-algebras, we may in a first step
replace {\em real} functions from the classical picture by {\em complex} functions, $f:M\to \bC$.
One may agree that this is just a ``trick'', since in the end the observables shall be real-valued.
Everything said in the preceding section goes through (except the cyclic order, of course): 
it suffices to replace the real projective line by the
{\em complex projective line}, $\bC\PP^1 = \bC \cup \{ \infty \}$ (which now topologically is a $2$-sphere, the
\href{https://en.wikipedia.org/wiki/Riemann_sphere}{\em Riemann sphere}).  
The cross-ratio is defined in the same way, and it is invariant under complex conjugation.
It follows that an ``obstate'' 
$F = (f,g;f_0,f_\infty)$ is real if, and only if, its cross-ratio is real. 
Now, it's a classical fact
 that $\CR(a,b;c,d)$ is real if, and only if, the four points $a,b,c,d$ lie on a 
 \href{https://en.wikipedia.org/wiki/Generalised_circle}{\em generalised circle},
that is, either lie on a circle in $\bC$, or on a real affine line. 
Thus we have two possibilities to define ``real obstates'':
\begin{enumerate}
\item
a quadruple of $\R \PP^1$-valued functions, as in the preceding subsection,
\item
a quadruple of $\bC\PP^1$-valued functions such that, at every point $p \in M$, the four values
lie on a generalised circle. 
\end{enumerate}
Let's call an obstate ``real'' in the first sense, and ``real-like'' in the second one.

\subsection{Antipode mapping}\label{sec:antipode}
The cross-ratio is invariant under the full projective group: it is a projective invariant. 
On the other hand, the dynamics of quantum mechanics is governed by the {\em unitary group}, which is much
smaller than the projective group. Thus at some point quantum mechanics requires to plug in some
additional structure. 
For instance, we may fix a scalar product on $\R^2$, or on $\bC^2$, say the standard scalar product
$\langle x,y \rangle = \overline x_1 y_1 + \overline  x_2 y_2$, and consider the induced {\em polarity} 
on $\bC \PP^1$, that is,
the {\em orthocomplement map} (where $J$ is the matrix given by (\ref{eqn:J}))
\begin{equation}
\alpha: \bC \PP^1 \to \bC  \PP^1, \quad [z] =\Bigl[ \begin{pmatrix}z_1\\z_2\end{pmatrix}\Bigr] 
\mapsto [z^\perp]= [ J \overline z] = 
\Bigl[ \begin{pmatrix} - \overline z_2\\ \overline z_1\end{pmatrix}\Bigr] .
\end{equation}
In the usual chart of $\bC \PP^1$, this map is given by $z \mapsto - \overline z^{-1}$;
but if we identify $\bC\PP^1$ with the Riemann sphere $S^2$, then $\alpha$ is rather represented by the
{\em antipode map} sending a point of the sphere to its opposite, or antipode point. 
The projective maps commuting with $\alpha$ are exactly those coming from the projective unitary group $\PP\UU(2)$.
Thus we can reduce the projective invariant cross-ratio to a two-point invariant of $\PP\UU(2)$:
in formula (\ref{eqn:CRR}), let $a=\alpha(y)$, $b=\alpha(x)$, then
\begin{equation}\label{eqn:CRRR} 
P(x,y):=
\CR (x,y;\alpha(y),\alpha(x)) = 
 \frac{ \langle x,y \rangle \cdot  \langle y,x \rangle }{\langle x,x \rangle \cdot \langle y,y \rangle}  = \cos^2(\phi(x,y)) ,
\end{equation}
where $\phi$ measures the angle between the vectors $x,y \in \bC^2$. 
Of course, the same holds for $\bC$ replaced by $\R$.\footnote{
The formula for $P(x,y)$ defines a {\em transition probability}, in the sense
of \cite{L98}, p.80, or \cite{L}, p.\ 31. 
Using (\ref{eqn:CRR}), the same
procedure can be applied to any projective space.}
Again, applying everything pointwise, these definitions carry over to function spaces: we can define
$\alpha(f)$ and $P(f,g)$ for functions.

\ssk
The antipode mapping $\alpha$ on $\bC\PP^1$ is {\em antiholomorphic}, just like the complex conjugation
$\tau(z) = \overline z$ of $\bC\PP^1$, whose fixed point set is $\R\PP^1$.
Since $\alpha$ and $\tau$ commute, the composition $\beta :=\alpha \circ \tau$ is the map induced by the
matrix $J$, given in the usual chart by
$z \mapsto -  z^{-1}$, which is a {\em holomorphic} map of order $2$. 
It has precisely two fixed points:
$i$ and $-i$. When picturing $\R \PP^1$  as equator of the sphere $\bC\PP^1$, these two fixed points shall
be pictured as {\em north and south pole}, and the points $0$ and $\infty$ on the equator could be called
{\em east and west pole}, and $1$ and $-1$ {\em front and back pole}.
The ``usual chart'' is stereographic projection from the west pole onto the tangent plane of the sphere at the
east pole 
(Figure \ref{fig:RS}). 
The four transformations  $\{  \tau,\alpha,\beta, \id \}$ form an abelian group (a Klein four group) acting on
$\bC\PP^1$.

\begin{figure}[h]
\caption{The Riemann sphere $\bC \PP^1$ with six poles.}\label{fig:RS}
\newrgbcolor{eqeqeq}{0.8784313725490196 0.8784313725490196 0.8784313725490196}
\newrgbcolor{aqaqaq}{0.6274509803921569 0.6274509803921569 0.6274509803921569}
\psset{xunit=0.6cm,yunit=0.6cm,algebraic=true,dimen=middle,dotstyle=o,dotsize=5pt 0,linewidth=0.8pt,arrowsize=3pt 2,arrowinset=0.25}
\begin{pspicture*}(-9.68,-4.78)(13.66,4.94)
\psplotImp[linewidth=1.8pt](-11.0,-8.0)(14.0,7.0){-1.0+1.0*y^2+0.0625*x^2}
\psplotImp(-11.0,-8.0)(14.0,7.0){-1.0+0.0625*y^2+0.5*x^2}
\psplotImp(-11.0,-8.0)(14.0,7.0){-16.0+1.0*y^2+1.0*x^2}
\psplotImp(-11.0,-8.0)(14.0,7.0){-1.0+0.0625*y^2+0.08333333333333333*x^2}
\psplot[linecolor=lightgray]{-9.68}{13.66}{(--2.--0.4294487507354019*x)/1.024295039463181}
\psplot[linecolor=lightgray]{-9.68}{13.66}{(--3.930370166695838-0.42694233953616334*x)/-4.810276799130294}
\psplot[linecolor=lightgray]{-9.68}{13.66}{(-3.9303701666958424--1.4512373789993456*x)/-2.060903212636129}
\begin{scriptsize}
\psdots[dotsize=8pt 0,dotstyle=*,linecolor=darkgray](0.,-4.)
\rput[bl](-0.08,-4.76){\darkgray{$S$ (south)}}
\psdots[dotsize=8pt 0,dotstyle=*,linecolor=darkgray](0.,4.)
\rput[bl](-0.12,4.24){\darkgray{$N$ (north)}}
\psdots[dotsize=8pt 0,dotstyle=*,linecolor=darkgray](-1.374686793247079,-0.9390898592677543)
\rput[bl](-1.1,-0.52){\darkgray{$F$}}
\psdots[dotsize=8pt 0,dotstyle=*,linecolor=darkgray](1.3746867932470865,0.9390898592677545)
\rput[bl](1.46,1.26){\darkgray{$B$}}
\psdots[dotsize=8pt 0,dotstyle=*,linecolor=darkgray](-3.435590005883215,0.5121475197315905)
\rput[bl](-6.0,0.5){\darkgray{$O$ (zero)}}
\psdots[dotsize=8pt 0,dotstyle=*,linecolor=darkgray](3.435590005883215,-0.512147519731591)
\rput[bl](4.16,-0.8){\darkgray{$W$ (infinity)}}
\psdots[dotsize=2pt 0,dotstyle=*,linecolor=darkgray](-0.040457868926124955,1.9355999417622334)
\rput[bl](0.04,2.06){\darkgray{$B'$}}
\psdots[dotsize=2pt 0,dotstyle=*,linecolor=darkgray](-8.379988211529422,-1.5608544482033058)
\rput[bl](-8.3,-1.44){\darkgray{$F'$}}
\end{scriptsize}
\end{pspicture*}
\end{figure}

\section{Completion of Quantum Theory}\label{sec:main}

This is the main section: we are going to explain the general setting ``completing''  usual, linear quantum
theory.
By ``usual'' formulation we mean the one in terms of a $C^*$-algebra $\bA$ (but we will not use all
properties of a $C^*$-algebra, only those which define a {\em $P^*$-algebra}, see 
Appendix \ref{app:P*}).  
 For some mathematical constructions and definitions we shall refer to the appendices.
The algebra $\bA$, respectively, its real subspace $\Herm(\bA)$, are completed by the following ``geometric spaces''
\begin{align}
\cG & := \Gras(\bA^2) = \{ x \subset \bA^2 \mid \, x \mbox{ (right) submodule, } x \not=0, x\not= \bA^2 \} ,
\\
\underline \cS & := \Gras_\bA (\bA^2) = \{ x \in \Gras(\bA^2) \mid \, x \cong \bA \},
\\
\overline \cS  &:= \Gras^\bA(\bA^2) =\{ x \in \Gras(\bA^2) \mid \, \bA^2/ x \cong \bA \},
\\
\cS & := \bA \PP^1 := \Gras_\bA^\bA(\bA^2) := \overline \cS \cap \underline \cS,
\\
\cR &:= \{ x \in \cS \mid \, x^\perp = J x \} \mbox{ (=Lagrangian variety of } \omega(u,v)=\langle Ju,v \rangle {\rm ) },
\\
\cR' & := \{ x \in \cR \mid \, \bA^2 = x \oplus x^\perp \} = \{ x \in \cR \mid \, \bA^2 = x \oplus Jx \} 
\label{eqn:realuniverse}
\end{align}
where $J$ is given by (\ref{eqn:J}) and $\perp$ the orthogonal complement with respect to the usual ``scalar product''
$\langle u,v \rangle = \sum_i u_i^* v_i$
on $\bA^2$. 
In the classical case $\bA = \bC$,
the spaces $\cG,\underline \cS,\overline \cS$ and $\cS$ all coincide with the Riemann sphere $S^2$, and the spaces
$\cR$ and $\cR'$ both coincide with the unit circle (equator) $S^1$. 
When $\bA$ is infinite dimensional, the inclusions 
$$
\cR' \subset\cR \subset \cS \subset \underline \cS \subset \cG
$$
are in general all strict, and one may consider them as inclusions of ``universes'', 
sitting inside each other like Matryoshka dolls. 
  The ``base points'' $0 = [(1,0)]$ and $\infty=[(0,1)]$ belong to all of them, and so does
the ``affine part'' $\{ [(1,a)]\mid a \in \Herm(\bA)\} \cong \Herm(\bA)$, which represents the ``algebra of (bounded)
observables'' from ``usual'' quantum mechanics, so that the nested sequence arises by adding more and more 
``points at infinity'' to the ``usual'' space. 
Having fixed the pair $(0,\infty)$, the ``natural chart'' $\bA \subset \cS$ generalizes stereographic
projection; but the ``set at infinity'' (the part of $\cS$ not covered by $\bA$) now is in general quite a big
set: it contains a distinguished point $\infty$, but also many others. If $\bA$ is finite-dimensional, then
$\bA$ will always be dense in $\cS$, but if $\bA$ is infinite-dimensional, then this need not be the case. 
We will describe two versions of the setting:
\begin{enumerate}
\item
a {\em weak}, or {\em projective setting} (subsection \ref{ssec:weak}): the structure is just given by $(\cS,\tau)$;
 its symmetry group is big (the whole projective group of $\cR$). This setting  suffices to define
 {\em expectation values} (first moments),
\item
a {\em  strong}, or {\em unitary setting} (subsection \ref{ssec:strong}): the structure is given by $(\cS,\tau,\alpha)$;
its symmetry group is smaller (essentially, unitary), and it permits to recast the whole of quantum theory,
including {\em higher moments}.
\end{enumerate}
As said in the introduction, this text is still preliminary and experimental: at present, it is not entirely clear to me
which parts of quantum theory really belong to the ``weak setting'', and which to the ``strong setting'', or maybe
to some intermediate setting yet to be defined. 

\subsection{The weak (projective) setting}  \label{ssec:weak}
This setting is given by the data $(\cS,\cR,\tau)$. Its symmetry group,
generalizing the usual projective group $\PP( \SL_2(\R))$,
 is described in the appendix, equation (\ref{eqn:autR}).
In this setting, it is appropriate to distinguish the projective line from its dual:

\subsubsection{Duality and self-duality}
The projective line over $\bA$, as well as the Hermitian projective line, are {\em self-dual}: they agree with their 
dual projective line, $\cS = \cS'$, $\cR = \cR'$, see Appendix \ref{app:P}.
However, both for mathematical and for philosophical reasons, we  shall separate,
whenever possible, two copies $(\cR,\cR')$, resp.\ $(\cS,\cS')$, considered to be ``dual to each 
other''.
In more technical terms, this means that
we  treat, whenever possible, the associative algebra $\bA$ as an {\em associative pair} $(\bA^+,\bA^-)=(\bA,\bA)$
(and the Jordan algebra $\Herm(\bA)$ as a {\em Jordan pair}); see Appendix \ref{app:AP}.
Still put differently, we try, as long as possible, to avoid using the unit element $1$ of $\bA$.

\subsubsection{Basic terminology: complete obstates}
We use the term ``obstate'' for ``observa\-ble-state'' to denote an entity incorporating ``observables'' and 
``states''. 

\begin{definition}
A {\em complete obstate}, or {\em obstate} for short, is a quadruple
$$
{\bf O}:=(A,W;A_0,W_\infty)
$$ 
such that:
$A,A_0 \in \cR$, and $W,W_\infty \in \cR'$, and 
$A_0 \top W_\infty$, $A_0 \top W$, $A \top W_\infty$ (where $\top$ means ``transversal'', see Appendix \ref{ssec:P2}).
The pair $(A,W)$ is called the {\em object part} of the obstate, or {\em objective obstate}, and the pair
$(A_0,W_\infty)$ is called the {\em reference part}, or {\em reference obstate}. 
The pair $(A;A_0)$ is called  {\em complete observable}, and the pair $(W;W_\infty)$ 
 {\em complete state}.
  This terminology is summarized by the 
``obstate matrix'':
\ssk
\begin{center}
\begin{tabular}{|l|c|c |}
  \hline
   & object part  & reference part \\
  \hline
 complete observable   &  $A$  & $A_0$  \\
 \hline
 complete state   &  $W$  & $W_\infty$  \\
  \hline
\end{tabular}
\end{center}
\end{definition}
 
\nin In ``usual'' quantum physics, the reference part is fixed once and for all, and then is  denoted just by
$(0,\infty)$. The idea of ``complete quantum physics'' should be that complete obstates
are the intrinsic objects to be studied. As long as the reference part is fixed, 
no deviation from usual quantum 
physics should appear, that is, we postulate the mathematical ``conservation rule'': {\em for a fixed reference part
$(A_0,W_\infty)$, the rules of complete quantum theory shall reduce to the rules of usual quantum theory}. 
(To ensure this, we have included the transversality assumption $A_0\top W_\infty$  in the definition.)
Since the unit element $\bf 1$ of the algebra does not appear in the reference part, 
this ``weak setting'' comprises all aspects of the usual theory that do not depend on, or do not require the choice of,
a unit element.

\ssk
We do not make any claims about ``interpretations'' or ``reality'' corresponding to a possible change of 
reference parts.
Indeed,  the reader may safely assume that $(A_0,W_\infty)$ is fixed once and for all.
 Maybe s-he will find, later on,
that it is much more convenient to assume that this is not the case, and that this is in much better keeping with some of the
current ``interpretations'' of usual quantum theory. This, possibly, could open the hypothetical window towards
``non Dirac-von Neumannian quantum theory'' -- which  should never be in contradiction with usual quantum physics.
(There might be apparent contradictions due to unclear terminology: before projective geometry was invented,
a phrase like ``two parallel lines intersect at infinity'' sounded contradictory.)

\subsubsection{Pure states; rank} 
Under our ``conservation rule'',  for a fixed reference system $(0,\infty)$, states $W$ shall correspond to density matrices, 
and thus pure states shall correspond to density matrices of rank one. 
The aim of the paper \cite{BeL} is to give a ``geometric interpretation'' of such concepts, in the general context of Jordan
geometries. Let me try to summarize the main ideas in the present, more special context: 
given two elements $x,y \in \cR$ that belong to some affine part $U_a$ of $\cR$, the locus of the real affine line
$[x,y]_a = \{ t x + (1-t) y \mid t \in \bR \}$ does in general heavily depend on the choice of $a$.
But for certain choices of the pair $(x,y)$, this locus does {\em not} depend on the choice of the affinization $a$: 

\begin{definition}\label{def:rank-one}
A pair $(x,y) \in \cR^2$ is said {\em of rank $1$}, or {\em of arithmetic distance $1$}, if $x \not=y$ and for all
$a,b \in \cR$ with $x,y \in U_a \cap U_b$:
$$
[x,y]_a \cap U_b = [x,y]_b \cap U_a .
$$
Then $\overline{[x,y]} := [x,y]_a \cup \{ \infty \}$ is a copy of the real projective line $\R\PP^1$ in $\cR$ that depends only
on $x$ and $y$, called an {\em intrinsic line in $\cR$}.
A complete state $(W;W_\infty)$ is said {\em pure} (and then we shall often write $(\psi;\psi_\infty)$, following a venerable tradition)
if the pair $(W,W_\infty)$ is of rank $1$.
\end{definition}

\nin
For instance, in a projective space $\R \PP^n$, every pair $(x,y)$ with $x\not=y$ is of rank 1:
{\em every} pair of distinct points defines a unique intrinsic projective line joining them. In sharp contrast, 
for higher rank geometries,
such as $\cR$, such lines can only follow very special directions (these directions lie on the extreme boundary of the 
``light cones'' that define the ``generalized conformal structure'' of $\cR$, see below, \ref{ssec:positivity}). 
Algebraically, saying that $(0,x)$ is of rank $1$ corresponds to saying that {\em $x$ is von Neumann regular}, 
or that we can find $\infty$ such that $x$ becomes an {\em idempotent} (see Def.\ \ref{def:idempotent}), or yet
that {\em $x$ generates a minimal  inner ideal}.
Likewise, in \cite{BeL} it is explained that higher rank is related to inner ideals that need not be minimal. 
Their geometric counterpart has been christianed {\em intrinsic subspace}.

\begin{remark}
The term {\em arithmetic distance} is due to 
\href{https://en.wikipedia.org/wiki/Hua_Luogeng}{L.-K.\ Hua}, who studied it for all series of finite-dimensional matrix geometries. 
There are interesting links between the arithmetic distance between $(x,y)$ and algebraic invariants of the torsors
$U_{xy}$. For instance, when $(x,y)$ is of rank $1$, then $U_{xy}$ is a solvable group with derived series having one non-trivial
term.
\end{remark}

\subsubsection{Expectation value of an obstate}
Assume $(A,W;A_0,W_\infty)$ is a complete obstate.
We have to extract a real number from these data, which for the fixed reference system $(A_0,W_\infty)$ shall coincide with the one given 
by  (\ref{eqn:exp1}) or (\ref{eqn:exp2}). 
Imperatively, this scalar has to  be given by a {\em scalar valued cross-ratio}:

\begin{definition}\label{def:EV}
The {\em expectation value} of the complete obstate $(A,W;A_0,W_\infty)$ is
\begin{align*}
\langle A,W; A_0,W_\infty \rangle :& = 
\tr (K_{A_0,W_\infty} (A,W ))   \\
& = \tr (\CR ( A,W; A_0,W_\infty )),
\end{align*}
where $K$ and the generalized cross-ratio $\CR$  are defined by eqn.\  (\ref{eqn:GCR1}) -- (\ref{eqn:GCR2}),
and $\tr$ denotes a trace on $\bA$ in the sense of Def.\ \ref{def:trace}. 
\end{definition}

\nin
This definition is {\em natural}, in the sense that it is invariant under the automorphism group 
$\Aut(\cS,\tau)$.  However, mind:
\begin{enumerate}
\item
{\em Traces} of linear operators always exist in finite dimension over a field, but 
not always in very general situations.
Indeed, this is not a ``quantum'' problem, but already appears in 
 the ``classical case'' (section  \ref{sec:class-revisted}):
associating a {\em scalar} to a pair (function, dual function) is some kind of {\em integration}, and already classical
integrals may lead to infinite values   (see Appendix \ref{app:TDI}).
\item 
For the formula from Definition \ref{def:EV} to reduce to (\ref{eqn:exp2}), in case $(A_0,W_\infty)=(0,\infty)$, we have to carefully 
distinguish a space from its dual space.
 If one misses that point, one would read the expression as
$\tr(AW^{-1})$.
\item
We cannot define ``second moments'' in the same way, since the  definition of an operator $A^2$ depends on the choice
of a unit element $1$, which is not given in the present setting. As far as I see, it is not possible to define
such higher moments in the present ``weak setting'': one needs more, and more rigid, structure to define them,
see below.
\end{enumerate}
\nin
The second item is related to the {\em normalization} which, in the usual theory,
 is necessary to write formula (\ref{eqn:exp2});
in our ``intrinsic'' formula in def.\ \ref{def:EV} {\em no} normalization is necessary 
(think of $W_\infty$ as the ``zero matrix'', which of course is not a density matrix itself, in the usual theory).

\ssk
Let us re-interprete this construction in a more geometric way for {\em pure} states $(\psi,\psi_\infty)$: in this case,
the intrinsic projective line ${\bf L} \cong \R\PP^1$ determined by the pure state contains already two distinguished elements,
$\psi$, and $\psi_\infty$. The observable $(A_0,A)$ defines two other distinguished points $(a_0,a)$ on $\bf L$:
namely, $a$ is the
unique point of $\bf L$ completing the affine line ${\bf L} \cap U_A$, and likewise for $a_0$. 
Now,
 the expectation value is the (classical)  cross-ratio of these four points on the line ${\bf L}  \cong \R\PP^1$:
\begin{equation}
\langle A,\psi; A_0,\psi_\infty \rangle  = 
\CR(a, \psi; a_0, \psi_\infty) .
\end{equation}
This is the analog of (\ref{eqn:exp1}). 
If $(A,A_0)$ 
happens to be already on $\bf L$ (so $a=A$, $a_0=A_0$),
 then the measurement is ``sharp'', but in general, this will not be the case,
and there will be higher moments (cf.\ below).
Note that for pure states we do not have to bother about problem (1)
mentioned above,  since  traces exist for rank-one operators.

\subsubsection{Axiomatic setting; superposition principle}
As said above,  in the usual setting,
Dirac's   ``superposition principle of quantum theory''
corresponds to assuming that observables are operators acting {\em linearly on a linear space}, or
that the ``(Jordan) algebra of observables'' carries a {\em bilinear} product. In our setting,
this property can be translated into the form of geometric axioms (cf.\ \cite{Be02, Be14}): it means that the geometry $(\cR,\cR')$ is an {\em affine pair geometry} -- 
every element $a \in \cR$ defines an {\em affine part $U_a$ of $\cR'$}, and vice versa, every $w \in \cR'$ defines an affine part
$U_w$ of $\cR$. In other words, the geometry is covered by ``affine charts'', which are part of its structure.
In an axiomatic ``geometrically complete quantum theory'', this property should be part of the axioms.
It then becomes a theorem (cf.\ \cite{BeL}) that the {\em intrinsic lines} form, 
in turn, another geometry, that is, they also have a local linear
structure, corresponding to the superposition principle. 
Thus a truly axiomatic presentation of ``completed quantum theory'' should be possible; but for pedagogical reasons it must
be postponed.

\subsubsection{Real versus complex}
Expectation values shall be real, and not complex. 
This can be achieved by a purely real theory, and
the setting presented so far does not (yet) really explain why {\em complex} numbers play such an important 
r\^ole in quantum theory, compared to the classical theory.
Indeed, everything said so far makes sense more generally when $\cR$ is the {\em Jordan geometry
corresponding to an abstract ordered Jordan algebra}, cf.\ \cite{Be17b} (except that the definition of the 
generalized cross-ratio becomes more involved if no associative structure is around). 
As far as I see, the true role of the complex numbers appears more clearly in the ``strong setting''. 
For the moment, we have the same
two options for formulating the ``complete'' theory as mentioned in subsection \ref{ssec:R2C}, and so far both of them appear to be reasonable:
\begin{enumerate}
\item
``real'':
we work in the universe of the Hermitian projective line $\cR = \cR'$; that is, all four components of 
$(A,A_0,W,W_\infty)$ shall  belong to $\cR$;
\item
``real-like'':
we  work in $\cS =  \bA \PP^1$, but 
we require that all four components of a complete obstate belong to a ``generalised circle'' (conjugate of the Hermitian projective line under the
projective group $\PP \Gl(2,\bA)$).
Since expectation values of quadruples are invariant under the projective group, this still ensures that
 all expectation values  are real.
 \end{enumerate}


\subsubsection{Positivity: cyclic order}\label{ssec:positivity}
In the $C^*$-algebra setting, it is part of the definition of states that they are {\em positive} linear functionals.
We have not included positivity in our definition of a complete state, since the precise formulation of such an
assumption is related to the question of ``interpretations'' of the formalism. First of all, in the projective setting,
the binary order relation generalizes to a {\em ternary}  relation.
As starting point, we use the binary partial order $\leq$ on $\Herm(\bA)$,
which exists by definition of a $P^*$-algebra (def.\ \ref{def:P*-algebra}),
 and then define a partial order
$\leq_c$ on each affine part $U_c$ (cf.\ \cite{Be17b}, Theorem 4.1), defining the ternary relation:

\begin{definition}
A complete state $(W;W_\infty)$ is called {\em positive with respect to a reference part $(A_0,W_\infty)$}, if the triple
$(A_0,W,W_\infty)$ is {\em cyclically ordered}, that is, if
$A_0 \leq W$ in the ordered vector space $U_{W_\infty}$.  
We say that $(A,A_0;W,W_\infty)$ is a {\em cyclically ordered obstate} if $(A,A_0,W_\infty)$ and
$(A_0,W,W_\infty)$ are cyclically ordered triples. This implies that the expectation value
$\langle A,W;A_0,W_\infty \rangle$ is positive. 
\end{definition}

\nin
As explained in \cite{Be17b}, the intervals on $\cR$ define a kind of {\em generalized conformal}, or {\em causal},
structure, modelled on the positive cone of $\Herm(\bA)$.


\subsection{The strong (unitary) setting}\label{ssec:strong} 
Now we add the following datum to the ``weak setting'':
the standard scalar product on $\bA^2$ defines an orthocomplementation map
$\alpha :\cS \to \cS$, $x\mapsto x^\perp$
 which is antiholomorphic and commutes with $\tau$, so that the holomorphic map
$\beta:= \alpha \circ \tau:
\cS \to \cS$ is again of order $2$. 
The data $(\cS,\tau,\alpha)$ define the ``strong setting''. 
There are no ``closed'' formulae for $\tau$ and $\alpha$, but as in the classical case,
the map $\beta$ is induced by the matrix $J = \bigl(\begin{smallmatrix}0&-1\\1&0\end{smallmatrix}\bigr)$, 
that is, $\beta = [J]$, 
so that in the usual chart, for $z \in \bA$,
\begin{equation}
\tau(z) = z^*, \qquad
\beta(z) = - z^{-1},\qquad
\alpha(z) = - (z^*)^{-1} .
\end{equation} 

\subsubsection{North and south pole.}
The map $J:\bA^2 \to \bA^2$ is diagonalizable over $\bA$: it has two eigenvectors $(i,1)$ and $(-i,1)$
with eigenvalues $i,-i$, so
\begin{equation}
\Bigr[ C^{-1} J C \Bigl] = \Bigr[ \begin{pmatrix} i &0\\0& - i \end{pmatrix}\Bigl] =
 \Bigr[ \begin{pmatrix} 1 &0\\0& - 1 \end{pmatrix}\Bigl], \quad
 \mbox{ where } C = \begin{pmatrix} i & - i \\ 1 & 1 \end{pmatrix} 
 \end{equation}
 (the matrix $C$ describes the {\em Cayley transform}, see below). 
Thus the map 
 $\beta= [J]:\cS \to \cS$ has precisely two fixed points, called {\em north pole} and {\em south pole},
\begin{equation}
N := [(i,1)], \qquad S:= [(-i,1)] . 
\end{equation}
Since one eigenvalue is the negative of the other,
the map $\beta$ acts by multiplication with $-1$ on the linear spaces $(\cS_N,S)$ and $(\cS_S,N)$, i.e.,
$\beta = (-1)_{N,S}$ (using notation (\ref{eqn:scalar})).
Thus $\beta$, and hence also $\alpha$, can be recovered from $(N,S)$, and we see that the data
$(\cS,\tau,\alpha)$ and $(\cS,\tau,N,S)$ are essentially equivalent.

\subsubsection{The canonical $S^1$-action.}
Since the data $(N,S)$ are canonical, not only the reflection map $(-1)_{N,S}$ is canonical, but every map of the
form  
 $\lambda_{N,S}$ with $\lambda \in S^1$. Indeed, these maps  
  commute  with
$\tau$ since $\tau \circ \lambda_{N,S} \circ \tau = \overline \lambda_{\tau(N),\tau(S)} =
(-\lambda)_{S,N} = \lambda_{N,S}$, hence preserve  $\cR$. Thus we get a canonical action
\begin{equation}
S^1 \times \cR \to \cR, \quad (\lambda,x) \mapsto \lambda_{N,S} (x) .
\end{equation}
In particular, $\beta$ has $i_{N,S}$ as a canonical square root: $i_{N,S}^2 = (-1)_{N,S}=\beta$. 

\subsubsection{Automorphism groups: unitary structure}
The strong setting is more rigid than the weak one, hence its automorphism group is smaller. 
Nevertheless, this group still has ``big orbits''. Let's explain this: 
The automorphism group of $\cS$ is $\PP \Gl(2,\bA)$; the one of the weak setting is
$\Aut(\cS,\tau) = \{ g \in \Aut(\cS) \mid g \circ \tau = \tau \circ g \}$ (cf. Appendix \ref{ssec:Hpl}),
and the one of the strong setting is
\begin{align*}
\UU:=\Aut(\cS,\tau,\alpha) & = \Aut(\cS,\tau) \cap \Aut(\cS,\alpha) \\
& = \Aut(\cS , \alpha) \cap \Aut(\cS,\beta) =
\PP\UU(2;\bA) \cap \Aut(\cS,\beta) \\
& = \PP
\Bigl\{ f= \begin{pmatrix} a & b \\ - b& -a \end{pmatrix} \mid \,
a^* a + b^* b = 1, a^*b - b^* a = 0 \Bigr\} 
\\
&= \PP
\Bigl\{ f= \begin{pmatrix} a & b \\ - b& -a \end{pmatrix} \mid \, a,b \in \bA, 
(a + i b) \in \UU(\bA)  \Bigr\},
\end{align*}
given by unitary operators $f :\bA^2 \to \bA^2$ such that $f J = J f$.
Via the Cayley transform,
this group is isomorphic to $\PP(\UU(\bA) \times \UU(\bA))$.

\subsubsection{The real unitary universe $\cR_{N,S}$, and strong obstates.}

\begin{definition}
We call {\em real unitary universe} 
the subset of $\cR$ given by all elements that are both transversal to $N$ and to $S$,
\begin{equation}
\cR_{N,S}:=
\cR \cap U_{N,S} = \{ x \in \cR \mid x \top N, x \top S \} .
\end{equation}
\end{definition}

\nin With $\cR'$ given by (\ref{eqn:realuniverse}), we always have $\cR_{N,S}\subset \cR' \subset \cR$. 
The real unitary universe is the space where strong obstates live:

\begin{definition}
A {\em strong obstate} is an obstate
$(A,W;A_0,W_\infty)$ such that
\begin{enumerate}
\item
$W_\infty$ and $A_0$ are antipodes of each other:
$W_\infty = \alpha (A_0)$,
\item
$A_0 \in \cR_{N,S}$ [by Theorem \ref{th:U-affine} below, all 4 components then belong to 
$\cR_{N,S}$].
\end{enumerate}
\end{definition}

\subsubsection{On the structure of the real unitary inverse}
Here are the most important results on the structure of $\cR_{N,S}$. 
They are special cases of more general and abstract results from \cite{BeKi2} (except for Theorem \ref{th:U-affine});
we will give more computational and down-to-earth proofs (using the Cayley transform) in \cite{Bexy}.
The first result, contained in \cite{BeKi2}, says that $\cR_{N,S}$ 
 ``is'' the unitary group.
This will be basic for our interpretation of unitary
time evolution \cite{Bexy}. 

\begin{theorem}
The real unitary universe $\cR_{N,S}$ carries a canonical {\em torsor structure}, that is, for any choice of origin 
$a \in \cR_{N,S}$ this set carries a group structure with unit element $a$ and
product $xz= x \cdot_a z$, such that with respect to any other
origin $y$, the product is given by $x \cdot_y z = xy^{-1}z$. Moreover, any of the groups thus obtained is isomorphic to
the unitary group $\UU(\bA)=\UU(\bA,*)$ (cf.\ Def.\  \ref{def:staralgebra}).  
\end{theorem}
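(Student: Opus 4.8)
The plan is to exhibit $\cR_{N,S}$ concretely via the Cayley transform and then recognize the torsor structure directly. First I would use the diagonalization $[C^{-1}JC] = \bigl[\bigl(\begin{smallmatrix}1&0\\0&-1\end{smallmatrix}\bigr)\bigr]$ to transport the whole picture to coordinates in which $N = [(1,0)]$ and $S = [(0,1)]$; in these coordinates $U_{N,S}$ is the set of $x \in \cS$ transversal to both base points, i.e. the ``affine-invertible'' part $\{[(1,z)] \mid z \in \bA^\times\} \cong \bA^\times$, equipped with the usual chart $z$. The condition $x \in \cR$ (the Hermitian/Lagrangian condition $x^\perp = Jx$) translates, after the Cayley transform, into the statement that $z$ is unitary: $z^*z = 1 = zz^*$. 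Thus I would establish the identification $\cR_{N,S} \cong \UU(\bA)$ as sets, taking care to track how the standard scalar product and $J$ transform under $C$ — this is essentially the computation deferred to \cite{Bexy}, and I would either reproduce its key line or simply cite \cite{BeKi2}.

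Next I would produce the torsor structure. The cleanest route is geometric: for a choice of origin $a \in \cR_{N,S}$, define the product by transporting the group law of $\UU(\bA)$ through the identification above normalized so that $a$ corresponds to the identity $1 \in \UU(\bA)$; explicitly, if $\phi_a : \cR_{N,S} \to \UU(\bA)$ is the Cayley chart sending $a \mapsto 1$, set $x \cdot_a z := \phi_a^{-1}(\phi_a(x)\phi_a(z))$. To verify the torsor compatibility $x \cdot_y z = x \cdot_a (y^{-1} \cdot_a x)^{?}\ldots$ — more precisely $x \cdot_y z = xy^{-1}z$ in any fixed $\cdot_a$-group — I would note that changing the origin from $a$ to $y$ amounts to left-right translation by $\phi_a(y)$ in $\UU(\bA)$, and $\UU(\bA)$ is a group, hence automatically a torsor over itself: $x \cdot_y z = x y^{-1} z$ is just the standard formula making any group into a torsor. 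The content is therefore entirely in showing (i) the chart $\phi_a$ exists and is a bijection for \emph{every} $a \in \cR_{N,S}$ (not just for $a$ corresponding to $1$), and (ii) the resulting product is \emph{canonical}, i.e. independent of auxiliary choices and intrinsic to the data $(\cS,\tau,\alpha)$ — this follows because $N,S$ are canonically determined by $(\cS,\tau,\alpha)$ (as shown just above in the excerpt) and the Cayley matrix $C$ only depends on $(N,S)$ up to a change that does not affect the projectivized maps.

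The main obstacle I expect is the honest verification that $\cR_{N,S}$, defined by the transversality-plus-Lagrangian conditions, really maps \emph{onto} the full unitary group $\UU(\bA,*)$ rather than some proper subset — in infinite dimensions one must be careful that ``$z$ invertible with $z^*z = 1$'' already forces $zz^* = 1$ (true in a $C^*$- or $P^*$-algebra, since a one-sided inverse that is also a $*$-adjoint is two-sided), and that no completeness/boundedness subtlety obstructs surjectivity. I would handle this by invoking the $P^*$-algebra axioms from Appendix \ref{app:P*} for the algebraic facts about units and inverses, and by citing the general torsor result of \cite{BeKi2} for the structural statement, reserving the explicit Cayley-transform computation for \cite{Bexy} as the authors indicate. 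A secondary point to get right is that the group obtained is $\UU(\bA)$ and not, say, $\PP\UU(\bA)$ or a central extension: since we are working inside $\cR$ (not a quotient) and the origin $a$ pins down a genuine identity element, the identification is with $\UU(\bA)$ on the nose, consistent with the automorphism-group computation $\UU = \Aut(\cS,\tau,\alpha) \cong \PP(\UU(\bA)\times\UU(\bA))$ acting on $\cR_{N,S}$ by the two-sided translations.
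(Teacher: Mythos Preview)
Your proposal is correct and aligns with the approach the paper indicates, though with a slight shift of emphasis. The paper itself does not prove this theorem in the text: it cites \cite{BeKi2} for the abstract result and promises the Cayley-transform computation for \cite{Bexy}. Your sketch is essentially the second of these --- the ``computational and down-to-earth'' route via the Cayley transform that the author explicitly defers.

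It is worth noting that the paper's primary justification, via \cite{BeKi2}, is the intrinsic construction recorded here as Theorem~\ref{th:P2}: for any pair $(N,S)$, the set $U_{NS} = U_N \cap U_S$ carries a canonical torsor law $(xyz)_{NS} = (P_x^N - P_z^S)(y)$ built from projection operators, with no chart needed; restricting to the $\tau$-fixed subset $\cR$ gives the torsor on $\cR_{N,S}$. This buys canonicity for free (no chart-independence check) and makes the formula $x\cdot_y z = xy^{-1}z$ immediate. Your Cayley-transform route instead identifies $\cR_{N,S}$ concretely with $\UU(\bA)$ first and reads off the torsor from the group --- more explicit, but you then owe the verification that the transported product does not depend on which Cayley chart $\phi_a$ you used, which you correctly flag. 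Both approaches are sound; you are doing what the paper postpones to \cite{Bexy}, while the paper itself leans on the abstract \cite{BeKi2} machinery.

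One small point: your worry that ``$z$ invertible with $z^*z=1$'' might not force $zz^*=1$ is unfounded --- invertibility already gives $z^{-1}=z^*$, hence $zz^*=1$, with no $P^*$-axioms needed. The genuine infinite-dimensional subtlety (that $z^*z=1$ alone need not imply invertibility) is already handled by the transversality condition defining $U_{N,S}$, as you note.
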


\begin{theorem}\label{th:U-trans}
The automorphism group $\UU$ acts transitively on the real unitary universe $\cR_{N,S}$.
The stabilizer group of a point $o$ is isomorphic to $\UU(\bA)$, so that as homogeneous space,
$\cR_{N,S} = \UU.o  \cong (\UU(\bA) \times \UU(\bA))/\UU(\bA)$.
\end{theorem}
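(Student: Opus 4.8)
The plan is to conjugate the whole situation by the Cayley transform $[C]$, $C=\bigl(\begin{smallmatrix}i&-i\\1&1\end{smallmatrix}\bigr)$, so as to replace the (implicit) base pair $(N,S)$ by the explicit pair $(0,\infty)$ of the natural chart $\bA\subset\cS$, where all structure maps have closed formulas. First I would record that $C^{*}C=2\cdot\id$, so $[C]\in\PP\UU(2;\bA)=\Aut(\cS,\alpha)$ and hence $[C]$ commutes with $\alpha$; moreover $[C](0)=N$ and $[C](\infty)=S$, so $[C]$ carries $(0,\infty)$ to $(N,S)$, hence $U_{0,\infty}$ onto $U_{N,S}$. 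Setting $\widetilde\tau:=[C]^{-1}\tau[C]$, $\widetilde\alpha:=[C]^{-1}\alpha[C]=\alpha$ and $\widetilde\beta:=\widetilde\alpha\circ\widetilde\tau=[C]^{-1}\beta[C]$, a short computation in the chart — writing $[C],[C]^{-1}$ as linear‑fractional maps and using $\tau(z)=z^{*}$, $\alpha(z)=-(z^{*})^{-1}$ and $C^{-1}JC=\mathrm{diag}(i,-i)$ — yields the clean formulas $\widetilde\tau(z)=(z^{*})^{-1}$ and $\widetilde\beta(z)=-z$, i.e.\ $\widetilde\beta=[\mathrm{diag}(1,-1)]$. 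Thus $[C]^{-1}$ is an isomorphism of $(\cS,\tau,\alpha;N,S)$ onto the ``standard'' strong structure based at $(0,\infty)$, and it suffices to prove the theorem with $\UU$ replaced by $\widetilde\UU:=[C]^{-1}\,\UU\,[C]=\Aut(\cS,\widetilde\alpha)\cap\Aut(\cS,\widetilde\beta)$ and $\cR_{N,S}$ by its image $[C]^{-1}(\cR_{N,S})=\widetilde{\cR}\cap U_{0,\infty}$.

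Next I would identify both objects concretely. Since $U_{\infty}$ is exactly the natural chart $\bA\subset\cS$ and $U_{0}\cap\bA=\bA^{\times}$, we have $U_{0,\infty}=\bA^{\times}$; intersecting with the fixed point set of $\widetilde\tau$ gives $\widetilde{\cR}\cap U_{0,\infty}=\{z\in\bA^{\times}\mid (z^{*})^{-1}=z\}=\{z\mid z^{*}z=1\}=\UU(\bA)$, sitting inside the chart. As for the group: $[g]\in\PP\Gl(2,\bA)$ commutes with $\widetilde\beta=[\mathrm{diag}(1,-1)]$ iff $g$ is, up to a central scalar, diagonal or antidiagonal; intersecting with $\PP\UU(2;\bA)$ forces the entries to be unitary, so $\widetilde\UU$ consists of the classes of $\mathrm{diag}(p,s)$ and of $\bigl(\begin{smallmatrix}0&q\\r&0\end{smallmatrix}\bigr)$ with $p,s,q,r\in\UU(\bA)$ — this is the concrete form of the isomorphism $\UU\cong\PP(\UU(\bA)\times\UU(\bA))$ already recorded above, the antidiagonal part being the factor $\Z/2$ that interchanges $N$ and $S$. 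In the chart $[\mathrm{diag}(p,s)]$ acts by $z\mapsto szp^{-1}$, which manifestly preserves $\UU(\bA)$.

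Transitivity and the isotropy are then immediate. Given $z_{1},z_{2}\in\UU(\bA)$, the element $[\mathrm{diag}(z_{2}z_{1}^{-1},1)]$ sends $z_{1}$ to $z_{2}$; hence $\widetilde\UU$, and so $\UU$, acts transitively, and all point‑stabilizers being conjugate it suffices to compute the one at the point $o\in\cR_{N,S}$ corresponding to $1\in\UU(\bA)$. The transformations fixing $1$ are the classes $[\mathrm{diag}(p,p)]$, $p\in\UU(\bA)$ — the diagonally embedded copy of $\UU(\bA)$, acting by conjugation $z\mapsto pzp^{-1}$. Therefore $\cR_{N,S}=\UU.o$ and, as a homogeneous space, $\cR_{N,S}\cong\widetilde\UU/\widetilde\UU_{o}\cong(\UU(\bA)\times\UU(\bA))/\UU(\bA)$ with $\UU(\bA)$ embedded diagonally; as a set this is again $\UU(\bA)$, in accordance with the torsor structure theorem above.

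The step I expect to cost the most care is the bookkeeping around the projectivization $\PP$: one must check that ``$[g]\in\PP\UU(2;\bA)$ and $[g]$ commutes with $\widetilde\beta$'' cuts out exactly the diagonal and antidiagonal unitary classes — the non‑commutative, ``mod scalars'' refinement of the last display in the description of $\UU$, which uses that a positive central element has a central square root in order to normalize $g^{*}g=\lambda\cdot\id$ to $g^{*}g=\id$ — and that ``stabilizer $\cong\UU(\bA)$'' is read in the same convention (up to central unitaries) as in the torsor theorem and in \cite{BeKi2}. A secondary point to state cleanly is that $\cR_{N,S}$ itself need not lie in a single affine chart (only its $[C]^{-1}$‑image does), so the identification is genuinely with the transformed universe $\widetilde{\cR}\cap U_{0,\infty}$, not with $\cR\cap\bA$. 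For a general $P^{*}$‑algebra one would instead invoke the abstract result of \cite{BeKi2}; the argument above is the computational version announced for \cite{Bexy}.
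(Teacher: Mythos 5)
Your route is genuinely different from the paper's: the paper disposes of the theorem in one line, by invoking the torsor structure from the preceding theorem --- left and right translations of the torsor $\cR_{N,S}\cong\UU(\bA)$ are automorphisms, by \cite{BeKi2}, hence act transitively --- and defers the explicit Cayley-transform computation to Part II \cite{Bexy}. What you propose is precisely that computational version: conjugating by $[C]$, checking $\widetilde\tau(z)=(z^*)^{-1}$ and $\widetilde\beta(z)=-z$, identifying $[C]^{-1}(\cR_{N,S})$ with $\UU(\bA)$ inside the chart $U_\infty\cong\bA$, and letting diagonal unitary classes act by $z\mapsto szp^{-1}$. These computations are correct (and consistent with the chart convention $a\mapsto[(1,a)]$ of the main text), and they buy what the paper's abstract argument does not: an explicit description of the action and of the isotropy, which is what one needs later anyway.

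Two slips should be repaired. First, a bookkeeping error: with your action formula $z\mapsto szp^{-1}$ for $[\mathrm{diag}(p,s)]$, the element $[\mathrm{diag}(z_2z_1^{-1},1)]$ sends $z_1$ to $z_1z_1z_2^{-1}$, not to $z_2$; take $[\mathrm{diag}(1,z_2z_1^{-1})]$ or $[\mathrm{diag}(z_2^{-1}z_1,1)]$ instead --- transitivity is of course unaffected. Second, and more substantively: you include the antidiagonal classes $\bigl[\bigl(\begin{smallmatrix}0&q\\r&0\end{smallmatrix}\bigr)\bigr]$ in $\widetilde\UU$ (correctly so, if $\UU$ is taken literally as $\Aut(\cS,\tau,\alpha)$: anticommuting with $J$ suffices for projective commutation with $\beta$, and one checks these classes also commute with $\widetilde\tau$), but then your claim that the transformations fixing $1$ are exactly the classes $[\mathrm{diag}(p,p)]$ fails: $\bigl[\bigl(\begin{smallmatrix}0&q\\q&0\end{smallmatrix}\bigr)\bigr]$ acts by $z\mapsto qz^{-1}q^{-1}$ and also fixes $1$. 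So either restrict $\UU$ to the subgroup of classes of unitaries strictly commuting with $J$, i.e.\ $\PP(\UU(\bA)\times\UU(\bA))$ --- the reading under which the theorem's ``stabilizer $\cong\UU(\bA)$'' and the paper's identification of $\UU$ are literally correct --- or keep the swap component, in which case the stabilizer acquires an extra $\Z/2$ and the homogeneous-space statement must be read for the index-two subgroup. (And, as you note yourself, ``$\cong\UU(\bA)$'' is in any case to be understood modulo central unitaries, since $[\mathrm{diag}(p,p)]$ acts trivially for central $p$.) As written, your definition of $\widetilde\UU$ and your isotropy computation are not consistent with each other; once the convention is fixed, the argument goes through.
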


\nin Indeed,  any torsor acts transitively on itself by left or right translations, and these
always belong to the automorphism group (\cite{BeKi2}).

\begin{theorem}\label{th:U-affine}
The real unitary universe contains affine parts defined by all of its points:
for all $a \in \cR_{N,S}$, the set $\cR_a = \{ x \in \cR \mid x \top a \}$ is included in $\cR_{N,S}$.
\end{theorem}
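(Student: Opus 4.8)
\emph{Proof proposal.} The plan is to reduce the assertion to a single base point by using the transitivity of the unitary symmetry group, and then to recognise the remaining claim as a spectral fact about self-adjoint elements of $\bA$.

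First I would record the relevant functoriality. Every $g\in\UU=\Aut(\cS,\tau,\alpha)$ preserves $\tau$, hence preserves $\cR=\mathrm{Fix}(\tau)$; it preserves $\beta=\alpha\circ\tau$, hence the unordered pair $\{N,S\}$ of its two fixed points; and it preserves the transversality relation $\top$. Therefore $g(\cR_x)=\cR_{g(x)}$ for every $x\in\cR$, and $g(\cR_{N,S})=\cR_{g(N),g(S)}=\cR_{N,S}$. By Theorem~\ref{th:U-trans}, $\UU$ acts transitively on $\cR_{N,S}$, so it suffices to produce a \emph{single} point $o\in\cR_{N,S}$ with $\cR_o\subseteq\cR_{N,S}$: for an arbitrary $a\in\cR_{N,S}$ one then picks $g\in\UU$ with $g(o)=a$ and gets $\cR_a=g(\cR_o)\subseteq g(\cR_{N,S})=\cR_{N,S}$.

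Next I would take $o:=0=[(1,0)]$ and describe its affine part in $\cR$ explicitly. One checks $0\in\cR_{N,S}$: it is $\tau$-fixed, and the $2\times 2$ matrices over $\bA$ with columns $(1,0),(i,1)$ and $(1,0),(-i,1)$ are invertible, so $0\top N$ and $0\top S$. Moreover a submodule of $\bA^2$ transversal to $[(1,0)]$ is precisely the graph $\{(zt,t)\mid t\in\bA\}=[(z,1)]$ of right multiplication by a unique $z\in\bA$ (this graph description is valid in arbitrary dimension and rules out exotic ``points at infinity''), and such a point lies in $\cR=\mathrm{Fix}(\tau)$ exactly when $z^*=z$. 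Hence $\cR_o=\cR\cap U_0=\{[(z,1)]\mid z\in\Herm(\bA)\}$, an affine copy of $\Herm(\bA)$. It then remains to verify $[(z,1)]\top N$ and $[(z,1)]\top S$ for every self-adjoint $z$. Writing $N=[(i,1)]$ and $S=[(-i,1)]$ in this same chart, a Schur-complement (elimination) computation in $\bA^2$ shows that $[(z,1)]\top[(w,1)]$ iff $z-w$ is invertible in $\bA$; so what must be seen is that $z-i\mathbf{1}$ and $z+i\mathbf{1}$ are invertible. This is the heart of the matter, and the only point where the $P^*$-algebra structure is used: a self-adjoint element has real spectrum, hence $\pm i\notin\mathrm{spec}(z)$, so $z\mp i\mathbf{1}$ is invertible (one may also exhibit explicit inverses built from $(z^2+\mathbf{1})^{-1}$, using that $z^2+\mathbf{1}$ is positive and invertible). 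Thus $\cR_o\subseteq\cR\cap U_N\cap U_S=\cR_{N,S}$, and the reduction step concludes the proof.

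The main obstacle is simply the honest bookkeeping at the base point: describing the affine part $\cR_o$ correctly (the graph picture is what keeps this safe when $\bA$ is infinite-dimensional and $\bA^2$ carries many submodules), together with the invertibility of $z\pm i\mathbf{1}$ for self-adjoint $z$, which is exactly where positivity (the $C^*$/$P^*$ axioms) enters. Everything else is formal manipulation of $\top$ under the transitive $\UU$-action. One could equally run the argument after applying the Cayley transform $[C]$, which carries $\{0,\infty\}$ to $\{N,S\}$ and $\cR_{N,S}$ to the unitary group inside the natural chart; the same spectral fact reappears there in a slightly less transparent form, so I would keep the chart above.
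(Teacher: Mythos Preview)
Your proposal is correct, and it supplies the details that the paper's proof only sketches. The paper merely says that ``the essential point is that the affine formula for the Cayley transform can be defined on the whole affine part'' and that this relies on positivity, deferring the computation to Part~II. Your argument makes this explicit: the invertibility of $z\pm i\mathbf{1}$ for self-adjoint $z$ (which you correctly derive from the $P^*$-axiom~(2) of Definition~\ref{def:P*-algebra} via the invertibility of $z^2+\mathbf{1}=z^*z+\mathbf{1}^*\mathbf{1}$) is exactly the statement that the affine Cayley formula $z\mapsto (z-i)(z+i)^{-1}$ is everywhere defined on $\Herm(\bA)$. Your reduction via the $\UU$-transitivity of Theorem~\ref{th:U-trans} to the single base point $o=0$ is a clean way to pass from one $a$ to all of $\cR_{N,S}$; the paper's Cayley-transform framing would need a similar homogeneity step, so the two routes are essentially the same with different packaging. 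Your closing remark that the Cayley chart gives an equivalent but less transparent computation is apt.
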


\begin{proof}
In contrast to the preceding theorems (which are valid for any $*$-algebra), this result crucially relies on the 
``positivity'' property of a $C^*$-algebra (it is valid for the more general $P^*$-algebras, but not for general
$*$-algebras).
The essential point is that the affine formula for the Cayley transform can be defined on the whole affine part.
Again, full details will be given in \cite{Bexy}.
\end{proof}

\nin The space $\cR$ can be seen as  as an infinite dimensional manifold, see \cite{BeNe}, such that
the affine parts $\cR_A$ form open chart domains.
Thus the preceding theorem implies that $\cR_{N,S}$ is a union of open sets, hence  open in $\cR$.

\begin{theorem}\label{th:fd}
If $\bA$ is a finite dimensional $C^*$-algebra, then $\cR_{N,S}$ is a topological connected component of $\cR$.
In particular, if $\bA = M(n,n;\bC)$, then $\cR =\cR' = \cR_{N,S}$ is the unitary group $\UU(n) = \UU(n\bC)$. 
\end{theorem}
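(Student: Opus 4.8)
The plan is to deduce Theorem~\ref{th:fd} from the preceding structural results (Theorems~\ref{th:U-trans} and~\ref{th:U-affine}) together with standard facts about finite-dimensional $C^*$-algebras. First I would recall that a finite-dimensional $C^*$-algebra $\bA$ is a finite product of matrix algebras $\bigoplus_k M(n_k,n_k;\bC)$, so that in particular $\bA$ is finite-dimensional as a real vector space and $\cR$ inherits the structure of a compact (real-analytic) manifold in the sense of \cite{BeNe}. The key input is Theorem~\ref{th:U-affine}: since every point $a\in\cR_{N,S}$ has the affine part $\cR_a=\{x\in\cR\mid x\top a\}$ contained in $\cR_{N,S}$, and these affine parts are open chart domains of $\cR$, the set $\cR_{N,S}$ is open in $\cR$. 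For the ``connected component'' claim I would show $\cR_{N,S}$ is also closed. Here the argument is that by Theorem~\ref{th:U-trans} the unitary automorphism group $\UU\cong\PP(\UU(\bA)\times\UU(\bA))$ acts transitively on $\cR_{N,S}$; since $\UU(\bA)$ is compact in finite dimensions, $\UU$ is compact, hence its orbit $\cR_{N,S}=\UU.o$ is a compact subset of the Hausdorff space $\cR$, therefore closed. An open and closed nonempty subset of $\cR$ is a union of connected components; transitivity of the connected group $\UU^0$ (or a direct connectedness check on $\UU(\bA)\times\UU(\bA)$, which is connected when $\bA$ is a product of matrix algebras) then forces $\cR_{N,S}$ to be a single component.

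For the second assertion I would specialize to $\bA=M(n,n;\bC)$. Here $\UU(\bA)=\UU(n)$ is connected, so by the homogeneous-space description $\cR_{N,S}\cong(\UU(n)\times\UU(n))/\UU(n)\cong\UU(n)$ as a manifold, via $(u,v)\mapsto uv^{-1}$ (equivalently, the torsor structure of the first displayed Theorem identifies $\cR_{N,S}$ with the group $\UU(\bA)$ once an origin is chosen). It then remains to see that in this case the inclusions $\cR_{N,S}\subset\cR'\subset\cR$ are all equalities. For $\cR'=\cR$ one notes that in a finite-dimensional (hence Artinian) algebra every Hermitian projective point $x\in\cR$ satisfies $\bA^2=x\oplus x^\perp$ automatically — the transversality defect measured by $\cR\setminus\cR'$ is an ``infinite-dimensional phenomenon,'' and the splitting $\bA^2=x\oplus Jx$ in \eqref{eqn:realuniverse} always holds over a semisimple Artinian $\bA$. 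For $\cR=\cR_{N,S}$, since $\cR_{N,S}$ is a connected component and $\cR$ is connected when $\bA=M(n,n;\bC)$ (the Lagrangian–Grassmannian-type variety $\cR$ over a simple algebra being connected — this is where I would cite or reprove connectedness of $\cR$ itself), we get $\cR=\cR_{N,S}$, and all three coincide with $\UU(n)$.

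The main obstacle I anticipate is not the topology (openness is immediate from Theorem~\ref{th:U-affine}, compactness of the orbit handles closedness) but rather pinning down precisely \emph{when} $\cR$ is connected, and the equality $\cR=\cR'$: for a general finite-dimensional $C^*$-algebra $\bA=\bigoplus_k M(n_k,n_k;\bC)$ the space $\cR$ decomposes as a product over the simple blocks, and one must check each factor is connected and that the $N,S$-transversality condition is vacuous block by block. The cleanest route is probably to use the Cayley transform (as the author promises in \cite{Bexy}): it identifies the affine part $\cR_{N,S}$ biholomorphically with an open set containing $\UU(\bA)$, and a dimension count plus the torsor identification shows this open set \emph{is} all of $\cR_{N,S}$; then one invokes that $\UU(\bA)\times\UU(\bA)$ is connected iff each block contributes a connected unitary group (always true over $\bC$) to conclude. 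I would flag that the genuinely $C^*$ (as opposed to merely $*$-algebraic) hypothesis enters only through Theorem~\ref{th:U-affine}, exactly as the author's proof of that theorem indicates, so no new positivity input is needed here beyond what has already been established.
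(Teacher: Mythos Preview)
Your proposal is correct and follows essentially the same route as the paper: openness of $\cR_{N,S}$ from Theorem~\ref{th:U-affine}, closedness from compactness of $\UU(\bA)$ in finite dimensions (hence compactness of the orbit $\cR_{N,S}$), and connectedness of $\UU(\bA)$ to get a single component. The paper's own argument is considerably terser and simply refers to \cite{Be00} for the matrix case $\bA=M(n,n;\bC)$, whereas you spell out the equalities $\cR=\cR'=\cR_{N,S}$ in more detail; but the underlying strategy is the same.
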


\begin{proof} If $\bA$ is finite dimensional, then the unitary group $\UU(\bA)$ is {\em compact}, hence 
$\cR_{N,S}$ is compact, hence closed.
Since $\UU(\bA)$ is also connected, it follows that $\cR_{N,S}$ is a connected component of $\cR$.
For the case $\bA = M(n,n;\bC)$, see also \cite{Be00}.
\end{proof}

\nin
If $\bA$ is infinite dimensional, the statement from the theorem will fail in general.
In group theoretic terms, this means that the $\Aut(\cR)$-orbit of a point $a \in \cR_{N,S}$ may be strictly bigger
than $\cR_{N,S}$, whereas in the finite dimensional case we have equality: the space $\cR_{N,S}$ then is what is
sometimes called a 
\href{https://en.wikipedia.org/wiki/Generalized_flag_variety#Symmetric_spaces}{\em symmetric $R$-space}.

\begin{theorem}\label{th:AB}
For each $a \in \cR_{N,S}$, the linear space
$$
\bA_a := \alpha(a)^\top = \{ x \in \cS \mid x \oplus \alpha(a) = \bA^2 \}
$$
carries the structure of an associative algebra, with zero vector $a$ and unit element
 $b:= i_{N,S}(a)$.
 This algebra  is
isomorphic to the asociative algebra $\bA$ with unit $ 1$.
Likewise, its real form $\bA_a^\tau$ is a Jordan algebra with unit element $b$ and zero vector $a$, isomorphic to
$\Herm(\bA)$.
\end{theorem}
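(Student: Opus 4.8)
The plan is to exhibit, for each $a\in\cR_{N,S}$, an explicit algebra structure on $\bA_a=\alpha(a)^\top$ by transporting the standard structure of $\bA$ via a Cayley-type transform, and then to check that the construction does not depend on auxiliary choices beyond $a$ itself. First I would record the key geometric facts already available: since $a\in\cR_{N,S}$ we have $a\top N$, $a\top S$, and $a\top\alpha(a)$ (the last because $\cR_{N,S}\subset\cR'$, and $\cR'$ consists precisely of those $x\in\cR$ with $\bA^2=x\oplus x^\perp=x\oplus\alpha(a)$-type splittings). Thus $\bA_a$ is well-defined as the affine part $U_{\alpha(a)}$ of $\cS$, and it is a module isomorphic to $\bA$. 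The point $a$ lies in $\bA_a$ (since $a\top\alpha(a)$) and serves as the zero vector of the affine/linear structure on $U_{\alpha(a)}$ with that choice of origin. The element $b:=i_{N,S}(a)$ also lies in $\cR_{N,S}$ (the $S^1$-action preserves $\cR$ and, one checks, preserves transversality to $N$ and $S$), hence $b\in\bA_a$ as well, and it is the natural candidate for the unit: $b^2=\beta(a)=(-1)_{N,S}(a)$ plays the role of $-1$, matching the usual-chart identities $\beta(z)=-z^{-1}$.

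Next I would make the isomorphism with $(\bA,1)$ concrete via the Cayley transform. The matrix $C=\bigl(\begin{smallmatrix} i&-i\\ 1&1\end{smallmatrix}\bigr)$ conjugates $J$ to $\mathrm{diag}(i,-i)$, hence conjugates the pair $(N,S)$ to the standard base points; equivalently, $[C]$ carries $(0,\infty)$ to $(N,S)$ in some normalization, and conjugating the standard affine chart $\bA\subset\cS$ (with its ring structure, zero $0$, unit $1$) by $[C]$ and by a projective translation sending $0$ to $a$ produces a ring structure on $U_{\alpha(a)}$ with zero $a$ and unit $i_{N,S}(a)$. Concretely: there is $g\in\PP\Gl(2,\bA)$ (a product of a Cayley-type map and a translation) with $g(0)=a$, $g(\infty)=\alpha(a)$, $g(1)=b$; transporting multiplication of $\bA$ through $g$ gives $\bA_a$ its algebra structure, automatically associative and unital with the stated zero and unit, and isomorphic as an algebra to $\bA$. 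For the Jordan/real form statement, one observes that $\tau$ commutes with $\lambda_{N,S}$ (this is the computation $\tau\lambda_{N,S}\tau=\lambda_{N,S}$ already performed in the excerpt when $\lambda\in S^1$), so $\tau$ stabilizes $\bA_a$ and fixes $a$ and $b$; its fixed-point set $\bA_a^\tau$ is then a real subalgebra closed under the symmetrized product, hence a Jordan algebra with unit $b$ and zero $a$, isomorphic to $\Herm(\bA)$ via the restriction of $g$.

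The main obstacle is \emph{well-definedness}: the algebra structure on $\bA_a$ must depend only on $a$ (equivalently, on $a$ together with the canonical data $\tau,\alpha,N,S$), not on the particular choice of the transporting map $g$. Two such maps $g,g'$ agree at $0,1,\infty$ after the respective normalizations — but on a noncommutative projective line $\bA\PP^1$ a projective transformation is \emph{not} determined by three points (unlike $\R\PP^1$), so one cannot simply invoke "three points determine the map." What saves the argument is that the relevant rigidity comes from the \emph{automorphism group of the full structure} $(\cS,\tau,\alpha)$: the stabilizer in $\UU$ of a point of $\cR_{N,S}$ is isomorphic to $\UU(\bA)$ (Theorem \ref{th:U-trans}), and such automorphisms act on the corresponding algebra $\bA_a$ as algebra automorphisms, so any two admissible transports differ by an automorphism and yield the \emph{same} algebra structure up to canonical identification. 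I would therefore prove well-definedness by showing that $b=i_{N,S}(a)$ is intrinsically determined (it is, being built from $a$ and the canonical $S^1$-action), that the zero is $a$, and that the multiplication is pinned down by requiring compatibility with the canonical $\cR$-geometry — i.e. that the intrinsic lines through $a$ (Definition \ref{def:rank-one}) are exactly the one-dimensional left ideals, forcing the product. The remaining verifications (associativity, that $\tau$ restricts correctly, that one recovers $\bA$ and $\Herm(\bA)$) are then routine transports of structure through the Cayley map, with the concrete formulae deferred to \cite{Bexy} as the text already announces.
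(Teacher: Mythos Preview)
Your proposal is correct and follows essentially the same route as the paper: transport the algebra structure from the base point $0$ to $a$ by a projective map, argue well-definedness via the fact that the stabilizer $\UU(\bA)$ acts by algebra automorphisms (Theorem~\ref{th:U-trans}), and defer the identification $b=i_{N,S}(a)$ to a Cayley-transform computation. The only difference is packaging: the paper transports directly by an element of $\UU=\Aut(\cS,\tau,\alpha)$, which makes both well-definedness and $\tau$-equivariance (hence the Jordan statement for $\bA_a^\tau$) immediate, whereas you take $g\in\PP\Gl(2,\bA)$ and then have to argue these points separately --- a harmless detour, not a different idea.
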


\begin{proof}
Since the stabilizer group
$\UU(\bA)$ acts (via conjugation) by automorphisms on the algebra $\bA$, we may transport the algebra structure
from the algebra at the base point $0$ to any other point of $a$ of $\cR_{N,S}$, by transitivity.
The pair $(0,\infty)$ is mapped to $(a,\alpha(a))$,  the unit $1$ is then mapped to a point $b$.
The only thing which is not quite obvious is that then, necessarily, $b = i_{N,S}(a)$.
This, again, is proved using the Cayley transform, see \cite{Bexy}.
\end{proof} 

\begin{remark}
On may think of $\bA_a$, or rather of $\Herm(\bA_a)$ as a ``tangent algebra of the geometry at the point $a$'' (see
\cite{Be14}). 
The product in the algebra with neutral element $b$, and the unitary group law $\cdot_b$ with unit $b$, are
{\em dual to each other}, in the sense of {\em Cartan duality of symmetric spaces}: 
the Jordan cone at $b$ is kind of ``non-compact dual'' of the ``compact-like''
unitary group at $b$. This is related to the topic of Jordan-Lie algebras, see \ref{ssec:JL}. 
\end{remark}

The preceding results permit to reduce ``strong completed quantum theory'' to ``business as usual'': 
since all algebras $\bA_a$ are equivalent, we may (as long as $a$ is considered to be fixed), ``without loss of generaltity'',
assume that $a=0=[(0,1)]$ is the ``usual base point. For instance:

\subsubsection{Second and higher moments}
Given a strong obstate $(A,W;A_0,W_\infty)$, we may compute in the algebra
$\bA_a$ with $a=A_0$.
Let $AW$ be the product of $A$ and $W$ in this algebra, and $L_X:\bA_a \to \bA_a$, $Y \mapsto XY$ the operator of left multiplication
by    $X$.  Then the operator valued cross ratio and $L_{AW}$ coincide:
\begin{equation}
\CR(A,W;A_0,W_\infty) = L_{AW},
\end{equation}
and hence also their traces:
$\langle A,W;A_0,W_\infty\rangle = \tr (AW)$, so that expectation values are calculated in the algebra $\bA_a$ in the usual way.
Since $\bA_a$ is an algebra with binary product (and not just an associative pair), we now can form also all expressions of the
form $A^k$, $k\in \N$, and in particular we can define 
 as usual the second moment (variance) of the strong obstate, by
\begin{equation}
V(A,W;A_0,W_\infty) := \langle A^2 \rangle_W - \langle A \rangle_W^2 = 
\tr ( AWA) - (\tr(AW))^2  .
\end{equation}
If we assume that $\bA$ is a $C^*$-algebra, we can also define the probability distribution on $\R$
induced by the complete obstate, via the spectral theorem, in the usual way (cf.\ e.g.,  \cite{vN, L}).
In a similar way, all other properties and constructions can be carried over from $\bA$ to $\bA_a$.

\subsubsection{Conceptual approach: geometry of Jordan-Lie algebras}\label{ssec:JL}
Presenting things by simply transferring everything to ``business as usual'', as phrased above, 
 is not very conceptual, nor satisfying, but at least we see that a geometric,
base point-free setting for the geometry of quantum theory exists (and this is all I wanted to show at present). 
Possibly, a better understanding of what is going on here can only be achieved in connection with studying
{\em dynamics}: the unitary (Schr\"odinger) evolution on the one hand (Part II \cite{Bexy}), and, much more difficult, a mathematical
analysis of the {\em measurement process} from a geometric viewpoint (Part III ?). 
Mathematically, as far as I see,  the ``strong setting'' is the geometric counterpart of the algebraic structure of a {\em Jordan-Lie algebra}
(cf.\ Appendix \ref{app:P*}, and Part II \cite{Bexy} for a more detailed introduction). 
In \cite{E} and in \cite{L98}, Jordan-Lie algebras are taken as mathematical starting point to develop quantum
theory; thus on purely mathematical grounds, I  think it should be important to fully understand what the 
``geometry of a Jordan-Lie algebra'' really is. 
In particular, the interplay between the weak, projective, setting, and the strong, unitary, setting is rather subtle, and the explanations
given above are certainly insufficient. 

\subsubsection{Completed qubits}\label{ssec:qubit}
The smallest non-commutative real universe is the 
\href{https://en.wikipedia.org/wiki/Qubit}{\em qubit-space},
the completion of the $4$-dimensional Jordan algebra
 $\Herm(2,\bC)$ (cf.\ also the table in Appendix \ref{ssec:Hpl}). This Jordan algebra  is isomorphic
to Minkowski space $\R^{3,1}$, and its positive cone is the Lorentz cone. Its completion is precisely 
 the {\em conformal compactification of
Minkowski space}, often used in relativity theory. (Of course, this is just a pure coincidence, isn't it?)

\subsubsection{Towards the second chapter: dynamics}\label{sec:2chapter2}
At this point, the first chapter of our book would end.
Almost everything the reader is waiting for is still missing, so for sure, s-he would be impatient to start reading the
second chapter: so far, there is no Schr\"odinger equation (no dynamics, no time at all), no Heisenberg relation, not even
$\hbar$ did show up.  So, I hope to meet you soon again.

\section{Some concluding remarks}\label{sec:conclusion}

I don't know how many chapters the book may have, and if it will  ever be finished. 
From a mathematical point of view, I think the ideas presented here are kind of inevitable, and should be
pursued until they are fully understood.
Meanwhile, here are a few more mathematical remarks.

\subsection{Duality,  self-duality, and von Neumann}\label{ssec:vN}
For my taste, one of the most interesting aspects of the theory explained so far
 is the interplay between {\em duality}, and {\em self-duality}:
in order to understand and to organize projective geometry, or geometry of quantum theory, 
\href{http://www.iecl.univ-lorraine.fr/~Wolfgang.Bertram/Atiyah-Duality.pdf}{duality is a necessary principle}; 
but then it turns out that certain
structures are {\em self-dual}. The self-dual structures are an important part of the landscape.
If I'm not mistaken, it is precisely the feature of self-duality that distinguishes our approach fundamentally
from the lattice theoretic Birkhoff--von Neumann approach 
\cite{V}: both are rooted in {\em projective geometry}, but self-duality is  uninteresting in classical 
approaches, which deal with projective geometries over {\em fields}. The lattice structure of a projective line
over a field is trivial,  hence uninteristing (subspaces are just individual points); over {\em rings}, this changes drastically.


\ssk
Seen from a different angle, working with geometries over rings, as opposed to those over fields, also allows to
integrate aspects of ``fuzzy'', or ``intuitionistic logic'', into our theory, without having to use abstract tools like topos theory
(cf.\ \cite{L}, Chapter 12): namely, in projective geometries over fields, there is just one ``incidence relation'' --
a point belongs to a line, or not: {\it tertium non datur.} 
In geometries over rings, there are a lot of shades of gray, between white
(the point has nothing in common with a line), and black (the point is totally included in the line).\footnote{ A  
one-dimensional submodule over a ring may intersect another subspace non-trivially, without being totally 
included in it.} 
Maybe this viewpoint could add a new facet to the topic of ``quantum logic''.

\subsection{Infinities; completeness} 
The completion of a linear space, such as the linear space of quantum theory, by ``points at infinity'' provides
a convenient and geometric language to speak about ``infinities''. 
Remarkably, it allows to give some sense to ``infinities'' that seem untractable without the geometric framework
(cf.\ Section 2 of \cite{BeKi}).
The bigger the ``set at infinity'' is, the more
it carries structure reflecting complicated analytic or arithmetic structures -- for instance, the ``most difficult base ring''
for our theory is $\K=\Z$ (it has few invertible elements, so we have to add a lot of points at infinity). 
Could this be a good piece of language for speaking about
 \href{https://en.wikipedia.org/wiki/Quantum_field_theory#The_problem_of_infinities}{``problems of infinities''} 
arising in physics?

\ssk
Paradoxically, while insisting on ``geometric completeness'', we relax demands on {\em analytic completeness}:
we prefer to use the more general $P^*$-algebras rather than $C^*$-algebras; they need not be complete in the
anaytic or metric sense. 
For instance 
I try to avoid using Banach space norms altogether (their geometric meaning in the non-linear context is unclear to me). 
And although it is analytically very convenient, on physical
grounds it seems hard to justify that all Cauchy sequences must converge. Statements of the kind
``{\em if} something converges, {\em then}...'' should suffice to cast the logical structure. The uncountable set of all possible 
limits of all possible Cauchy sequences forms another ``infinity'' in the geometric sense: both notions of completeness
and of infinity have non-trivial relations with each other.

\subsection{Completion of commutative and of non-commutative geometry}
According to the basic pattern, {\em commutative} algebras $\bA$ correspond to {\em classical systems}.
This remains true on the level of the ``completed'' theory:
commutative $C^*$-algebras are function algebras, and when $\bA$ is a function algebra, our formalism
of ``complete quantum theory'' corresponds exactly to what has been proposed in Section \ref{sec:class-revisted}:
the projective line over $F(M,\R)$ really is the space $F(M,\R\PP^1)$, and hence ``in the classical limit'', we shall get
back a classical system.\footnote{ If we take algebras of {\em continuous}, or {\em smooth}, 
functions, then some non-trivial analysis is
needed to describe the precise relationship between the projective line over this algebra, and the space of 
all (continuous or smooth) functions with values in $\R\PP^1$.}
In other words, Section \ref{sec:class-revisted} describes the ``completion of commutative geometry''.

\ssk
In the same way, the general ``complete quantum theory'' can be seen as ``completion of Non-Commutative Geometry'',
where 
\href{https://en.wikipedia.org/wiki/Noncommutative_geometry}{\em Non-Commutative Geometry (NCG)} here 
is understood in
its technical sense defined by A.\ Connes. 
Although  methods and aims of NCG appear to be quite different from what is proposed here, I see
no principal obstruction for asking about transferring certain of its methods and results to the ``completed'' setting.
After all, motivation of NCG by physics is often emphasized, so it might turn out that NCG and ``complete quantum
theory''  are complementary, approaching the same reality from different sides.

\subsection{Composed systems}
If $\bA \PP^1$ corresponds to one system and $\bB\PP^1$ to another, then the composed system can be described by
$(\bA \otimes \bB) \PP^1$ -- composition of systems corresponds to {\em tensor product of algebras}.
This idea works well for {\em Jordan-Lie algebras}, and it even distinguishes them among general Jordan algebras, for 
which a tensor product of algebras is missing. 
In fact, this observation was the historical origin for developing the concept of Jordan-Lie algebra in \cite{GP76}, 
going back to ideas on
``composition classes'' by
Niels Bohr -- see \cite{Be08b} for references and some more remarks. 
This, again, motivates to develop a theory describing the geometry corresponding to 
Jordan-Lie algebras (\cite{Bexy}).

\appendix 

\section{$P^*$-algebras}\label{app:P*}

One cannot do mathematics without using formulas. We have avoided them as much as possible in the main text,
but  in the appendices we give precise definitions and formulas for some of the objects mentioned in the main text.
First of all, some definitions related to algebra. 
Algebraists have the habit to work with algebras defined {\em over a general commutative field or ring $\K$}.
We will do the same here; for physicists this may be motivated by the fact that in view of understanding {\em discrete models}
(cf.\ subsection \ref{ssec:contvsdiscrete}) it may interesting to have formalisms that are valid beyond the ``usual choice''
$\K=\R$. 

\begin{definition}
A {\em (binary) algebra} (over a commutative field or ring $\K$) is a linear space over $\K$, together with a bilinear product
map $\bA \times \bA \to \bA$.
If the product is associative, we call $\bA$ an {\em associative algebra}, and often write the product as $a \cdot b$,
or by simple
juxtaposition $ab$.
If the product is skew-symmetric and satisfies the Jaobi-identity, $\bA$ is called a {\em Lie algebra}, and the product
is often denoted by $[a,b]$.
If the product, written $\bullet$, is commutative and satisfies the {\em Jordan identity},
$$
\forall a,b \in \bA : \qquad a \bullet (b \bullet a^2) = (a \bullet b) \bullet a^2, \mbox{ where } a^2 = a \bullet a ,
$$
 then $\bA$ is called a {\em Jordan algebra}.
 \end{definition}

\nin
Every associative algebra gives rise to a family of associative, 
 Lie, and of Jordan algebras, sometimes called {\em homotopes} of each other:
 
 \begin{lemma}
Let $\bA$ be an associative algebra, and  fix
 $u \in \bA$. Then 
\begin{equation}\label{eqn:JL1}
a \cdot_u b = aub, \qquad
[a,b]_u := aub - bua, \qquad
a \bullet_u b := \frac{1}{2} (aub + bua).
\end{equation}
are associative, Lie, respectively Jordan algebra products on $\bA$. 
\end{lemma}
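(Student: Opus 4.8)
The plan is to verify each of the three claimed identities directly by expanding both sides in terms of the associative product, which is the only structure we are handed. Since each $\cdot_u$, $[\cdot,\cdot]_u$, $\bullet_u$ is manifestly bilinear (it is a composition of the bilinear product with fixed multiplication by $u$), the only content is the defining identity of each class: associativity for $\cdot_u$, skew-symmetry plus Jacobi for $[\cdot,\cdot]_u$, and commutativity plus the Jordan identity for $\bullet_u$.

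First I would check the associative case: $(a\cdot_u b)\cdot_u c = (aub)uc = au(buc) = a\cdot_u(b\cdot_u c)$, using associativity of $\bA$ to rebracket freely. Second, for $[\cdot,\cdot]_u$, skew-symmetry is immediate from the definition; for Jacobi one expands $[[a,b]_u,c]_u + [[b,c]_u,a]_u + [[c,a]_u,b]_u$ into twelve monomials of the form (permutation of $a,b,c$ interspersed with two $u$'s) and observes they cancel in pairs exactly as in the standard commutator computation — the fixed $u$'s play no obstructing role since they sit in the same slots throughout. In fact the cleanest route is to note that $a\mapsto au$ and $b\mapsto ub$ is not quite an algebra map, so instead I would simply remark that $[a,b]_u = ab' - b'a$ is not available; better to just grind the twelve terms, or alternatively invoke that $(\bA,\cdot_u)$ is associative (already proved) and that the commutator of any associative algebra is a Lie algebra.

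Third, for $\bullet_u$: commutativity is clear. For the Jordan identity I would set $x = a\bullet_u a = aua$ and verify $a\bullet_u(b\bullet_u x) = (a\bullet_u b)\bullet_u x$. Expanding, $a\bullet_u(b\bullet_u(aua)) = \tfrac14\bigl(au(bu(aua)) + au((aua)ub) + (bu(aua))ua + ((aua)ub)ua\bigr)$ and similarly for the right side; after using associativity to drop all brackets, both sides become $\tfrac14$ of the same sum of four words in $a,b,u$, namely $aubuaua + auauaub + buauaua + auaubua$. The hard part — really the only part requiring care — is keeping the bookkeeping of these monomials straight and checking the two quadruples coincide termwise; there is no conceptual obstacle, only the risk of a transcription slip, so I would lay the eight words out explicitly and match them.

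Alternatively, and more economically, I would derive all three at once from the associative case: having shown $(\bA,\cdot_u)$ is an associative algebra, the bracket $[a,b]_u = a\cdot_u b - b\cdot_u a$ is its commutator, hence automatically a Lie algebra, and $a\bullet_u b = \tfrac12(a\cdot_u b + b\cdot_u a)$ is its symmetrized product, hence automatically a Jordan algebra by the standard fact that the plus-algebra $\bA^+$ of any associative algebra satisfies the Jordan identity. This reduces the whole lemma to the one-line associativity check plus a citation of the classical $\bA^{(+)}$ and $\bA^{(-)}$ constructions, which I expect is the intended proof.
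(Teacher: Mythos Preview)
Your proof is correct. The paper itself gives no proof of this lemma at all --- it is stated and immediately followed by the remark that for $u=1$ one recovers the usual products, so the author treats it as well-known. Your final ``alternative'' route is exactly the right way to think about it and is almost certainly what the author has in mind implicitly: once $(\bA,\cdot_u)$ is shown to be associative by the one-line rebracketing $(aub)uc = au(buc)$, the bracket $[a,b]_u$ and the symmetrized product $a\bullet_u b$ are precisely the commutator and anti-commutator of $(\bA,\cdot_u)$, so the Lie and Jordan identities follow from the classical facts about $\bA^{(-)}$ and $\bA^{(+)}$. Your direct monomial check of the Jordan identity is also fine (and the four words on each side do match), but it is unnecessary once you have the associative homotope in hand.
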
 

\nin When $u=1$ is a neutral element, we get the ``usual'' products $ab$,
$[a,b]$ and $a\bullet b$.

\begin{definition}
An associative algebra is called {\em unital} if it has a unit element $1$. Then an element $a \in \bA$ is called {\em invertible}
if there is $b\in \bA$ with $ab =1=ba$. The set $\bA^\times$ of invertible elements then is a group.
\end{definition}

\nin
It is easy to show that $a$ is invertible, if and only if, both the left and right multiplication operators
$L_a (x) = ax$ and $R_a(x)=xa$ are invertible, iff the operator $Q_a(x)=L_a \circ R_a(x) = axa$ is invertible. 
This may serve to define invertible elements even in non-unital associative algebras (see below, def.\
\ref{def:invertibleelement}).

\begin{definition}[$*$-algebra]\label{def:staralgebra}
A  \href{https://en.wikipedia.org/wiki/*-algebra}{\em $*$-algebra} is an associative complex algebra $\bA$ together with an {\em involution} 
$\bA \to \bA$, $a \mapsto a^*$ (that is, a complex anti-linear map such that $(a^*)^*=a$,  $(ab)^* = b^* a^*$, $1^*=1$).
An element $a \in \bA$ is called {\em Hermitian} if $a^*=a$, and {\em skew-Hermitian} if $a^* = -a$.
The sets of {\em (skew) Hermitian elements} are denoted by 
\begin{equation*}
\Herm(\bA)=\{ a \in \bA \mid a^* = a \} ,\qquad
\Aherm(\bA)=\{ a \in \bA \mid a^* = a \} .
\end{equation*}
The {\em unitary group of a (unital) $*$-algebra} is the subgroup of $\bA^\times$ given by
$$
\UU(\bA,*) := \{ a \in \bA \mid a a^* = 1 = a^* a \} = \{ a \in \bA^\times \mid \, a^{-1} = a^* \} .
$$
\end{definition}

\nin
By decomposing $a=\frac{a+a^*}{2}+\frac{a-a^* }{2}$, we see that
$\bA = \Herm(\bA) \oplus \Aherm(\bA)$,
and since $*$ is antilinear, we get $\Aherm(\bA) = i \, \Herm(\bA)$, whence
\begin{equation}
\bA =\Herm(\bA) \oplus i \, \Herm(\bA) .
\end{equation}

\begin{lemma}
Assume $\bA$ is a $*$-algebra, and
consider the products given by (\ref{eqn:JL1}).
If $u^* =u$, then $\Herm(\bA)$ is a Jordan algebra and $\Aherm(\bA)$ a Lie algebra, and if $u^* = -u$, then $\Herm(\bA)$ is a Lie algebra and $\Aherm(\bA)$ is a Jordan algebra.
In particular, $\Herm(\bA)$ is a Jordan algebra for the product $\bullet$ and a Lie algebra for the product
$[--]_{\hbar i}$, for any constant $\hbar \in \R$.
\end{lemma}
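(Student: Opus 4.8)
The plan is to verify the claimed identities directly from the definitions of the homotope products in \eqref{eqn:JL1} and the elementary properties of the involution $*$. Recall that for fixed $u \in \bA$ we have $a \cdot_u b = aub$, and that $(ab)^* = b^*a^*$ together with $1^*=1$ and antilinearity of $*$. First I would record the basic computation: for any $u$, $(a \cdot_u b)^* = (aub)^* = b^* u^* a^*$. From this single formula everything follows by inspecting the symmetric and skew products $[a,b]_u = aub - bua$ and $a \bullet_u b = \tfrac12(aub + bua)$.

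Next I would split into the two cases according to the sign of $u^*$. Suppose $u^* = u$ and $a,b \in \Herm(\bA)$. Then $(a \bullet_u b)^* = \tfrac12(b^*u^*a^* + a^*u^*b^*) = \tfrac12(bua + aub) = a \bullet_u b$, so $\Herm(\bA)$ is closed under $\bullet_u$, and it is a Jordan algebra because the homotope $\bullet_u$ of an associative algebra is Jordan and the Hermitian elements form a subalgebra of it (this is the standard fact underlying the lemma; the Jordan identity is inherited). Similarly $([a,b]_u)^* = (aub)^* - (bua)^* = b^*u^*a^* - a^*u^*b^* = bua - aub = -[a,b]_u$, so the bracket of two Hermitian elements is skew-Hermitian; hence $\Aherm(\bA)$ is closed under $[-,-]_u$, and it is a Lie algebra since $[-,-]_u$ is a Lie product on $\bA$ (skew-symmetry and Jacobi for the homotope bracket are routine). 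For $a,b \in \Aherm(\bA)$ with $u^*=u$, one checks $(a \bullet_u b)^* = \tfrac12(b^*u^*a^* + a^*u^*b^*) = \tfrac12((-b)u(-a) + (-a)u(-b)) = a\bullet_u b$, so $\Aherm(\bA)$ is \emph{not} in general stable, consistent with the lemma only claiming the Jordan/Lie split in the stated pairing — actually the cleaner route is: since $\Aherm(\bA) = i\,\Herm(\bA)$, rescaling $u$ by $i$ converts one case into the other, so the case $u^* = -u$ follows formally from $u^*=u$ by replacing $u$ with $iu'$, $u'^*=u'$.

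For the case $u^* = -u$: with $a,b \in \Herm(\bA)$, $(a\bullet_u b)^* = \tfrac12(b^*u^*a^* + a^*u^*b^*) = \tfrac12(-bua - aub) = -(a\bullet_u b)$, so $a \bullet_u b \in \Aherm(\bA)$, meaning $\bullet_u$ does \emph{not} preserve $\Herm(\bA)$; instead $[a,b]_u$ satisfies $([a,b]_u)^* = b^*u^*a^* - a^*u^*b^* = -bua + aub = [a,b]_u$, so the bracket of Hermitian elements is Hermitian, giving $\Herm(\bA)$ the structure of a Lie algebra under $[-,-]_u$; and dually $\Aherm(\bA)$ becomes a Jordan algebra under $\bullet_u$. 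Finally, the "in particular" statement: taking $u = 1$ (which is Hermitian) gives that $\Herm(\bA)$ is a Jordan algebra under $\bullet = \bullet_1$; taking $u = \hbar i$, which is skew-Hermitian for real $\hbar$, gives that $\Herm(\bA)$ is a Lie algebra under $[a,b]_{\hbar i} = \hbar i(ab - ba)$.

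I expect no serious obstacle here — the whole lemma is a formal consequence of $(aub)^* = b^*u^*a^*$ plus the already-cited homotope lemma. The only point requiring a little care is not the $*$-bookkeeping but making sure the Jordan identity genuinely passes to the subspace $\Herm(\bA)$ (resp.\ $\Aherm(\bA)$): this is clear because a linear subspace closed under a Jordan product, sitting inside a Jordan algebra, is automatically a Jordan subalgebra, so the identity is inherited verbatim. Likewise skew-symmetry and the Jacobi identity for $[-,-]_u$ restrict to any subspace closed under the bracket. Thus the proof reduces to the three sign computations above.
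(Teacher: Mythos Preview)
Your approach is exactly the paper's: the entire lemma rests on the identity $(aub)^* = b^* u^* a^*$, and the paper's own proof is literally just that observation together with ``check that the spaces are stable under the products in question.'' Your write-up is simply a more explicit unpacking of that one line.

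One small logical slip worth cleaning up: from ``the bracket of two \emph{Hermitian} elements is skew-Hermitian'' you cannot directly conclude ``$\Aherm(\bA)$ is closed under $[-,-]_u$''; that inference goes in the wrong direction. You need the computation for $a,b \in \Aherm(\bA)$ (with $u^*=u$), which gives $([a,b]_u)^* = (-b)u(-a) - (-a)u(-b) = -[a,b]_u$, so closure does hold. You essentially have this already via your remark $\Aherm(\bA) = i\,\Herm(\bA)$, but the ``hence'' as written is a non sequitur. The rest, including the digressive middle paragraph, is harmless.
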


\begin{proof}
It suffices to check that the spaces are stable under the products in question, and this follows directly from
$(aub)^* = b^* u^* a^*$.
\end{proof}

\nin
The Jordan and Lie products on $\Herm(\bA)$ satisfy certain natural compatibility conditions: they define a
{\em Jordan-Lie algebra}. This structure is important for the theory of {\em time evolution}, and will be 
investigated in more detail in Part II \cite{Bexy}.

\begin{definition}[$P^*$-algebra]\label{def:P*-algebra}
 A {\em $P^*$-algebra} is a {\em positive $*$-algebra}, that is, a $*$-algebra such that $\Herm(\bA)$
carries a structure of
\href{https://en.wikipedia.org/wiki/Ordered_vector_space}{\em ordered vector space}
(over the real field, or over some other partially ordered ring),
such that, 
\begin{enumerate}
\item
whenever $b \in \Herm(\bA)$ is positive (i.e., $b\geq 0$), then 
$\forall a \in \bA :\,  a b a^* \geq 0$, 
\item
for all $a \in \bA$, and all  invertible
$b \in \bA$, the element
$a^* a + b^* b$ is invertible. 
\end{enumerate}
\end{definition}

\nin
If this holds, then $(\Herm(\bA), \leq )$ is an {\em ordered Jordan algebra} (see \cite{Be17b}).
The second condition is a weakening of the well-known condition of being
\href{https://en.wikipedia.org/wiki/Jordan_algebra#Formally_real_Jordan_algebras}{\em formally real}.

\begin{lemma}
Any $C^*$-algebra (where $x\geq 0$ iff $Spec(x) \geq 0$)
 is a $P^*$-algebra.
 \end{lemma}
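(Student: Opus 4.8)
The statement to prove is that every $C^*$-algebra is a $P^*$-algebra, where the order on $\Herm(\bA)$ is the standard one ($x \geq 0$ iff $\mathrm{Spec}(x) \subset [0,\infty)$). The plan is to verify, in order, that (i) this order makes $\Herm(\bA)$ an ordered vector space, then check the two axioms from Definition \ref{def:P*-algebra}, namely (ii) positivity is preserved under the maps $b \mapsto aba^*$, and (iii) $a^*a + b^*b$ is invertible whenever $b$ is invertible.

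For (i), I would recall the standard facts from $C^*$-algebra theory: the set of positive elements is a closed convex cone in $\Herm(\bA)$ with $P \cap (-P) = \{0\}$, so $x \leq y :\Leftrightarrow y - x \geq 0$ is a genuine partial order compatible with the real vector space structure of $\Herm(\bA)$; this is exactly an ordered vector space. For (ii), the key classical lemma is that an element $h \in \Herm(\bA)$ is positive if and only if $h = c^*c$ for some $c \in \bA$ (equivalently $h = x^2$ for some $x \in \Herm(\bA)$). Given $b \geq 0$, write $b = x^2$ with $x = x^* \geq 0$; then $aba^* = a x x a^* = (ax)(ax)^* = (xa^*)^*(xa^*) \geq 0$ by the same characterization. (Strictly, $(ax)(ax)^*$ is of the form $cc^*$; one uses that $cc^*$ and $c^*c$ have the same spectrum up to $\{0\}$, hence $cc^* \geq 0$ too.) For (iii), I would argue as follows: for any $a$, $a^*a \geq 0$, and for invertible $b$, $b^*b$ is positive \emph{and} invertible, hence $b^*b \geq \delta \mathbf{1}$ for some $\delta > 0$ (since its spectrum is a compact subset of $(0,\infty)$). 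Then $a^*a + b^*b \geq b^*b \geq \delta \mathbf{1} > 0$, and an element of a $C^*$-algebra bounded below by a positive scalar multiple of the identity has $0$ outside its spectrum, hence is invertible.

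The only genuine subtlety — and the step I would be most careful about — is (iii), specifically the passage ``$a^*a + b^*b \geq \delta\mathbf{1}$ implies invertibility.'' This uses that if $h \in \Herm(\bA)$ and $h \geq \delta \mathbf{1}$ with $\delta > 0$, then $\mathrm{Spec}(h) \subset [\delta, \infty)$, so $0 \notin \mathrm{Spec}(h)$ and $h$ is invertible in $\bA$. This is immediate from the spectral characterization of the order but deserves an explicit line. Everything else is a routine invocation of the standard structure theory of $C^*$-algebras (the cone of positives, the $c^*c$-characterization, spectral permanence), so I would cite a standard reference rather than reprove these. The proof in the paper is in fact just the two-line remark that all the needed facts are classical; I would expand it only to the extent of naming which classical facts are used for each of the two $P^*$-axioms.
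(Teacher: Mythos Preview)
Your proposal is correct and follows essentially the same route as the paper: the paper's argument is exactly your step (iii) --- for invertible $b$ one has $\mathrm{Spec}(b^*b)\geq\lambda>0$, hence $\mathrm{Spec}(a^*a+b^*b)\geq\lambda$, hence invertibility --- while the ordered-vector-space structure and axiom (1) are taken as standard $C^*$-facts without further comment. Your write-up simply makes explicit which classical lemmas are being invoked for (i) and (ii), which is a harmless (and helpful) elaboration of the same proof.
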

 
\nin Indeed, in a $C^*$-algebra, (2) is vacuous if the algebra has no invertible elements; else, 
  $Spec(b^*b) \geq \lambda $ for some constant
$\lambda >0$ if $b$ is invertible, hence the same holds for $a^*a + b^* b$, hence $a^*a  + b^* b$ is invertible.
We prefer to work with $P^*$-algebras, since they are more general than $C^*$-algebras, 
and their defining properties have a clear geometric meaning.

%

 
\section{Associative pairs}\label{app:AP}

Square matrices are generalized by rectangular matrices (including the important special cases of row and column vectors). 
In the same way, usual (binary) algebras (associative, or
Jordan) are generalized by   {\em associative pairs}, resp.\ {\em Jordan pairs}. Both concepts are not very well
known among mathematicians. The Jordan pair concept, as introduced by Loos in \cite{Lo75}, is quite technical,
and we will not use it in this text. The concept of associative pair, on the other hand, is very simple (see \cite{Lo75}, or
 Appendix B in \cite{BeKi}):

\begin{definition}
An \emph{associative pair (over a commutative ring $\bK$)} is a pair $(\bA^+,\bA^-)$ of
$\bK$-modules together with two trilinear maps
\[
\langle \cdot,\cdot,\cdot \rangle^\pm :\bA^\pm \times \bA^\mp \times \bA^\pm  \to
\bA^\pm, \quad
(x,y,z) \mapsto \langle xyz \rangle^\pm
\]
satisfying the following {\em para-associative law}:
\[
\langle xy \langle zuv\rangle^\pm \rangle^\pm =
\langle\langle xyz\rangle^\pm uv\rangle^\pm=\langle x\langle uzy\rangle^\mp v\rangle^\pm.
\]
It is called {\em commutative} if always $\langle x y z \rangle^\pm = \langle z y x \rangle^\pm$.
Fixing the middle element $a \in \bA^\pm$, we get a binary associative product on $\bA^\mp$, denoted by
$\bA_a$ and called the
{\em $a$-homotope}:
\begin{equation}
xz:=
x \cdot_a z := \langle x a z \rangle^\pm .
\end{equation}
\end{definition}

\ssk
\nin{\bf Examples of associative pairs.}

\begin{enumerate}[label={(}\arabic*{)},leftmargin=*]
\item
Every associative algebra $\bA$ gives rise to an associative pair
$\bA^+ = \bA^- = \bA$ via $\langle xyz\rangle^+ = xyz$, $\langle xyz\rangle^- = zyx$.
\item
For  $\bK$-modules $E$ and $F$, let
$\bA^+ = \Hom(E,F)$, $\bA^- = \Hom(F,E)$, and
\[
\langle XYZ\rangle^+ = X \circ Y \circ Z \qquad
\langle XYZ\rangle^- = Z \circ Y \circ X.
\]
Taking $F=\K$, a linear space $E$ and its dual $E'$ form an associative pair.
\item
$('\R,\R')$ is an associative pair, and so is 
$(F(M,'\R),F(M,\R'))$ (cf.\ section \ref{sec:class-revisted}).
\item
Let $\hat{\bA}$ be an associative algebra with unit $1$ and idempotent $e$
(that is, $e^2=e$) 
and $f:=1-e$ its ``opposite idempotent''.
Let
\[
\hat{\bA} = f \hat{\bA} f \oplus f \hat{\bA} e \oplus e \hat{\bA} e
\oplus e \hat{\bA} f =
\bA_{00} \oplus \bA_{01} \oplus \bA_{11} \oplus \bA_{10}
\]
with
$\bA_{ij} = \setof{x \in \hat{\bA}}{ex=ix, xe=jx}$
the associated {\em eigenspace (Peirce) decomposition}. Then
\[
(\bA^+,\bA^-) := (\bA_{01},\bA_{10}), \quad
\langle xyz\rangle^+ := xyz, \quad \langle xyz\rangle^- := zyx
\]
is an associative pair.
\end{enumerate}

\nin
It is easy to show that every associative pair arises from
an associative algebra $\hat{\bA}$ with idempotent $e$ in the way
just described (\cite{Lo75}, Notes to Chapter II).

\begin{definition}[invertible elements] \label{def:invertibleelement}
We call an element $x \in \bA^\pm$ \emph{invertible} if
\[
Q_x:\bA^\mp \to \bA^\pm, \quad y \mapsto \langle xyx\rangle
\]
is an invertible operator.
\end{definition}

\nin As shown in \cite{Lo75}, associative pairs with invertible
elements correspond to unital associative algebras:
namely, $x$ is invertible if and only if the homotope algebra $\bA_x$ has
a unit (which is then $x\inv:=Q_x\inv x$).

\begin{definition}[idempotent]\label{def:idempotent}
An {\em idempotent} in an associative pair is a pair $(e^+,e^-)\in A^+ \times A^-$ such that
\[
\langle e^+,e^-,e^+ \rangle = e^+ \qquad \mbox{ and } \qquad
\langle e^-,e^+,e^- \rangle = e^- \, .
\]
\end{definition}

\nin
Idempotents are a  tool to start to  ``glue together'' the two spaces $\bA^+$ and $\bA^-$.

\section{Projective spaces, projective lines}\label{app:P}

In this appendix we describe the construction of basic ``geometric  spaces'' by using rings and algebras.
For sake of generality, in this appendix $R$ 
is a {\em (possibly non-commutative) ring with unit $1$}
(we reserve the letter $\K$ to {\em commutative} rings), and $W$ a {\em right module} over $R$.
For a first reading, think of $R=\R$ or $\mathbb C$, and $W$ a vector space, say $W = \bC^n$, or a Hilbert space;
but for a second reading, it will be important to allow for $R$ a {\em non-commutative} ring:
namely, the role of $R$ may be taken by some 
$*$-algebra $\bA$.  For a general ring $R$,
recall that a 
\href{https://en.wikipedia.org/wiki/Module_%28mathematics%29}{module} 
over $\K$ is defined like a vector space, except that there is no commutativity of scalars, and therefore we
agree to write scalars always on the right of vectors.

\subsection{Grassmannians}\label{ssec:P1}
Let $W$ be a right $R$-module, together with a direct sum decomposition $W = A \oplus Z$.
We define the {\em Grassmannian of type $A$ and co-type $Z$} to be the set
of all submodules $E$ that are isomorphic to $A$ and admit a complement $E'$ isomorphic to $Z$:
\begin{equation}
\Gras_A^Z(W) := \{ E \subset W \mbox{ submodule}  \mid \, E \cong A, \exists E' \cong Z : \, W = E \oplus E' \} .
\end{equation}
The pair of Grassmannians
\begin{equation}\label{eqn:P1}
(\cX, \cX') = (\Gras_A^Z(W),\Gras_Z^A(W))
\end{equation}
is said to be {\em in duality}.
For instance, the pair $(\Gras_p(\K^{p+q}),\Gras_q(\K^{p+q}))$, where for a field $\K$,
$\Gras_k(\K^n)$ is the {\em Grassmannian of $k$-dimensional subspaces}  of $\K^n$, 
is such a pair.
The {\em general linear group} $\Gl(W)$ acts on $\cX$ and on $\cX'$. This action is transitive: by definition, there are
linear isomorphisms $g_1:A \to E$, $g_2:Z \to E'$, whence $g:=g_1 \oplus g_2 \in \Gl(W)$ sends $A$ to $E$ (and $Z$ to $E'$).
If $R$ is commutative, then the scalars
act trivially on $\cX$ (but else not).

\subsection{Projective spaces, projective lines, self-duality} 
We say that $A$ is a {\em line} if $A$ is isomorphic to the base ring
($A \cong R$), and then call 
$Z$  a {\em hyperplane}  if $W = A \oplus Z$. The {\em projective space of $W$} is the space of all lines admitting a hyperplane
 complement, and its {\em dual projective space} is the space of all hyperplanes: we write
 \begin{equation}\label{eqn:P2}
(\cX,\cX') = (\Gras_R^Z(W),\Gras_Z^R(W)) = (\PP (W), \PP(W)') .
\end{equation}
In the special case $W = R \oplus R = R^2$, with $A$ the first and $Z$ the second factor, this defines the
{\em projective line $R \PP^1$ over $R$}, together with its dual projective line $(R \PP^1)'$:
\begin{equation}
(R \PP^1, (R \PP^1)' ) = (\Gras_R^R(R \oplus R), \Gras_R^R(R \oplus R)) .
\end{equation}
As sets, $R \PP^1$ and  $(R \PP^1)'$ agree: the projective line is {\em self-dual}. 
Both copies may be distinguished by taking different {\em base points}: in the first copy, the base point
$0$ is the first factor $R \times 0 = [(1,0)]$, and in the second copy, the base point $\infty$ is the second factor
$0 \times R = [(0,1)]$, where we write
$[(x,y)] = (x,y)R$ for the right module generated by $(x,y)$. 
Given these base points, there is a {\em natural imbedding}
\begin{equation}\label{eqn:imbedding}
R \to R \PP^1, \quad z \mapsto [(z,1)],
\end{equation}
and $\Gl(2,R)=\Gl(R\oplus R)$ acts, just as in the classical case $R=\bC$, on the affine part $R$
by ``fractional linear transformations'':
\begin{equation}
\begin{pmatrix} a & b \\ c & d\end{pmatrix} \Bigl[ \begin{pmatrix} z \\ 1 \end{pmatrix}\Bigr] =
\Bigl[ \begin{pmatrix} az+b \\ cz+d \end{pmatrix} \Bigr] 
=
\Bigl[ \begin{pmatrix} (az+b)(cz+d)^{-1} \\ 1 \end{pmatrix}\Bigr] .
\end{equation}

\subsection{Transversality, projection operators}\label{ssec:P2}
Let $(\cX,\cX')$ be as in (\ref{eqn:P1}). A pair $(x,a) \in \cX \times \cX'$ is called {\em transversal}, and we write $a\top x$, if
$W$ is the direct sum of $a$ and $x$: $W = a \oplus x$. 
We denote the set of all complementary subspaces of $a$ by 
\begin{equation}\label{eqn:P3}
U_a = \{ x \in \cX \mid x \top a \} .
\end{equation}

\begin{theorem}\label{th:P1}
The set $U_a$ carries a natural structure of an affine space over $R$.
\end{theorem}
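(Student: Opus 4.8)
The plan is to show that once we fix one element $o \in U_a$ as a ``base point'' (an origin), the set $U_a$ can be identified with the $R$-module $\Hom_R(o, a)$ in a way that depends on $o$, and then to verify that a change of base point induces an affine-linear reparametrization, so that the affine structure itself is base-point-free. Concretely, fix $a \in \cX'$, so that $W = a \oplus a'$ for suitable $a'$; every $x \in U_a$ is a complement to $a$ in $W$. First I would observe that since $a$ is a complement to $x$ and also a complement to $o$, the projection of $W$ onto $o$ along $a$ restricts to an isomorphism $x \to o$; inverting and subtracting the identity, one gets that $x$ is the graph of a unique $R$-linear map $f_x \colon o \to a$, namely $x = \{\, v + f_x(v) \mid v \in o \,\}$. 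This is well defined precisely because $x \top a$. Thus $x \mapsto f_x$ gives a bijection $U_a \to \Hom_R(o,a)$, and we transport the $R$-module structure of $\Hom_R(o,a)$ back to $U_a$; by declaring $o$ to be the zero of this module, $U_a$ becomes an $R$-module with distinguished origin $o$.

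The second step is to check base-point independence. Given two origins $o, o' \in U_a$, I would compute the transition map $\Hom_R(o,a) \to \Hom_R(o',a)$, $f_x \mapsto f'_x$, relating the two graph descriptions of the same $x$. Because both $o$ and $o'$ are graphs over some fixed reference complement, this transition is the composition of a fixed linear isomorphism (coming from the change of ``horizontal'' subspace) with a fixed translation (coming from the fact that $o'$ itself is not the zero of the first chart). Hence the two module structures differ by an affine bijection of $R$-modules; therefore the family of module structures obtained from all possible origins defines a single, canonical \emph{affine} space structure over $R$ on $U_a$, with no preferred origin. This is exactly the assertion of the theorem.

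The cleanest way to organize the computation is to work in the split coordinates $W = a' \oplus a$ (identifying $o$ with $a'$ once a first origin is chosen), write elements of $U_a$ as graphs of maps $a' \to a$, and note that $\GL(W)$ — in particular the lower-triangular unipotent subgroup fixing $a$ — acts on these graphs by $f \mapsto f + (\text{const})$, i.e.\ by translations, which makes transparent both that $U_a$ is an affine space and that its translation group is $\Hom_R(a',a)$. For the projective-line case $W = R^2$, $a = [(0,1)] = \infty$, this recovers the familiar picture: $U_\infty = \{[(z,1)] \mid z \in R\} \cong R$ as an affine line, with any choice of origin $z_0$ making it the $R$-module $R$ via $z \mapsto z - z_0$, and fractional-linear maps fixing $\infty$ act as $z \mapsto az + b$.

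The main obstacle is purely bookkeeping with one-sided scalars: since $R$ need not be commutative and modules are right modules, I must be careful that $\Hom_R(o,a)$ is itself only a module over the appropriate ring (it is a right $R$-module when $a \cong R$, using the right $R$-action on the target), that the graph maps $f_x$ are genuinely $R$-linear and not merely additive, and that the transition maps respect the right-module structure rather than accidentally introducing a left action. The existence and uniqueness of $f_x$ is immediate from $x \top a$ (no completeness or finiteness of $W$ is needed), so there is no analytic subtlety; everything reduces to the elementary linear algebra of complementary submodules, carried out with scalars kept consistently on the right.
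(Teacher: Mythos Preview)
Your proposal is correct and is precisely the standard argument the paper has in mind. In fact the paper does not give a proof at all: it simply says ``This result is classical, and easily proved'' and records, as a hint, the matrix $\bigl(\begin{smallmatrix} r & 0 \\ 0 & 1 \end{smallmatrix}\bigr)$ (with respect to $W = x \oplus a$) that realises scalar multiplication by $r$ in the linear space $(U_a,x)$. Your graph parametrisation $U_a \cong \Hom_R(o,a)$ is exactly what underlies that matrix formula, and it is made explicit by the paper itself a few paragraphs later (subsection on the operator-valued cross-ratio), where it writes ``every element $b \in U_x$ can be written as the graph of a unique linear map $\beta : a \to x$'' and identifies $(U_x,U_a) = (\Hom_R(a,x),\Hom_R(x,a))$. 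So your route and the paper's are the same; you have simply spelled out what the paper leaves to the reader, including the check that a change of origin is affine.

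One small sharpening: you correctly flag the one-sidedness issue, but note that for a general Grassmannian (with $o$ and $a$ not free of rank one) the translation module $\Hom_R(o,a)$ is only an abelian group (or $\K$-module, if $R$ is a $\K$-algebra), not naturally an $R$-module; the phrase ``affine space over $R$'' is really tailored to the projective-line case $o \cong a \cong R$, where $\Hom_R(R,R)\cong R$. The paper's matrix hint likewise only makes literal sense in that case, since the entry ``$r$'' lives in $\End_R(x)$, which equals $R$ exactly when $x\cong R$. Your closing remarks already anticipate this, so nothing needs fixing.
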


\nin
This result is classical, and easily proved. For instance, for any scalar $r\in R^\times$, multiplication by $r$ in
the linear space $(U_a,x)$,
\begin{equation}\label{eqn:scalar}
r_{a,x}:\cX \to \cX, \quad y \mapsto r_{a,x}(y) 
\end{equation}
is given by the matrix $\bigl(\begin{smallmatrix}r&0\\0& 1\end{smallmatrix}\bigr)$ with respecto to the direct sum decomposition
$W=x \oplus a$.  
The following is less classical (to my knowledge,  \cite{BeKi} is the first time it appeared):

\begin{theorem}\label{th:P2}
Fix $(a,b) \in \cX' \times \cX'$ and let $U_{ab}:= U_a \cap U_b$ (set of common complements of $a$ and $b$).
If $U_{ab}$ is not empty, fix an arbitrary element $y\in U_{ab}$.
Then $U_{ab}$ carries a natural group structure with neutral element $y$. In case  $a=b$, this is the
additive group law of the vector space $U_a$ with zero vector $y$, and in case $a \top b$, this group is
isomorphic to the general linear group $\GL(a)$.
\end{theorem}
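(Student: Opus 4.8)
The plan is to reduce everything to a convenient normal form and then identify the resulting structure by direct computation. First I would observe that, since $\Gl(W)$ acts transitively on transversal pairs (as recorded in the discussion of Grassmannians, subsection \ref{ssec:P1}), and more generally acts on the configuration $(a,b,y)$, I may as well normalize. In the case $a=b$ there is nothing to normalize beyond what Theorem \ref{th:P1} already gives: $U_a$ is an affine space over $R$, and choosing $y\in U_a$ as origin turns it into an $R$-module whose underlying additive group is the claimed group law with neutral element $y$. So the content is entirely in the case $a\neq b$, and the subcase $a\top b$.

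When $a\top b$, I would pick the normal form $W = a\oplus b$ with $a = R\times 0$, $b = 0\times R$ (using an element of $\Gl(W)$ to move the given pair there; $U_{ab}\neq\emptyset$ is exactly what guarantees $a$ and $b$ are isomorphic, hence $W\cong R^2$ and this normalization is legitimate). Then a common complement $x\in U_{ab}$ is a submodule complementary to both factors, i.e. the graph of an isomorphism; concretely $x = \{(\xi g,\xi)\mid \xi\in R\}$ for a unique $g\in\Gl(a)\cong R^\times$ (identifying via the chosen bases). This sets up a bijection $U_{ab}\to\Gl(a)$, $x\mapsto g_x$. Transporting the group law of $\Gl(a)$ through this bijection defines the group structure on $U_{ab}$; the given element $y$ corresponds to some $g_y$, and conjugating/translating by $g_y^{-1}$ (which is realized by an element of $\Gl(W)$ fixing $a$ and $b$) I can arrange $g_y = 1$, so that $y$ is the neutral element. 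The key point to verify is naturality: that this group law does not depend on the normalization, only on $(a,b,y)$. For that I would note that the stabilizer in $\Gl(W)$ of the triple $(a,b)$ acts on $U_{ab}$, that this action is simply transitive once we also fix $y$ — this is precisely the graph description above — and that any two normalizations differ by such a stabilizer element, under which the identification with $\Gl(a)$ is equivariant. Hence the transported law is canonical, and the isomorphism type is $\Gl(a)$.

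The remaining case, $a\neq b$ but not transversal (so $a\cap b\neq 0$ and $a+b\neq W$ in the module sense — the "shades of gray" alluded to in subsection \ref{ssec:vN}), is where I expect the real obstacle. Here one cannot reduce to $W=a\oplus b$. I would instead decompose relative to $c := a\cap b$ and $a+b$: write $W = c\oplus m\oplus n\oplus p$ with $a = c\oplus m$, $b = c\oplus n$ (and $p$ an extra complement of $a+b$), which is possible over $R$ because $a,b$ and their common complement $y$ are all free of rank $1$, so all these pieces are finitely generated projective and the relevant short exact sequences split. A common complement $x$ of $a$ and $b$ then has a "graph-like" parametrization over this finer decomposition, and one reads off a group law which will visibly be an extension of an $R^\times$-part (from the transversal directions) by an additive part (from the direction of $c$); this is the solvable behaviour of the torsors $U_{xy}$ mentioned in the arithmetic-distance remark. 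The hard part is organizing this bookkeeping cleanly enough that the naturality — independence of all the auxiliary splittings — is transparent rather than a miracle of cancellation; the cleanest route is again to exhibit the relevant stabilizer subgroup of $\Gl(W)$ acting simply transitively on $\{\text{common complements}\}$ once $y$ is fixed, and to transport the group structure from that stabilizer. I would expect the statement for this case either to be folded into the two cases explicitly named in the theorem or to be exactly the kind of computation the author defers to \cite{BeKi}.
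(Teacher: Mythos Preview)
Your approach is genuinely different from the paper's, and the difference is instructive. The paper does not split into cases at all: it writes down a single closed formula for the product, valid for \emph{every} pair $(a,b)$ simultaneously. Namely, for $x,y,z\in U_{ab}$ one sets
\[
x\cdot_y z \;:=\; M_{xabz}(y), \qquad M_{xabz}:=P_x^a - P_z^b,
\]
where $P_x^a:W\to W$ is the projector with image $x$ and kernel $a$. One checks from the projector identities $P_x^a P_z^a = P_x^a$, $P_x^a P_x^b = P_x^b$, $P_z^b P_z^a = 0$ that $M_{xabz}$ is invertible with inverse $M_{zabx}$, and the group axioms (with neutral element $y$) fall out by direct computation. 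Naturality under $\Gl(W)$ is immediate because projectors transform by conjugation. The special cases $a=b$ and $a\top b$ are then read off from the same formula rather than being separate constructions.

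Your normal-form argument for $a\top b$ is correct in spirit (common complements as graphs of isomorphisms, hence $U_{ab}\cong\Gl(a)$ once $y$ is fixed), though note the theorem lives in the general Grassmannian setting of \ref{ssec:P1}, so $a\cong Z$ need not be a line and ``$W\cong R^2$'', ``$\Gl(a)\cong R^\times$'' are not warranted in general. The real gap is your intermediate case $a\neq b$, $a\not\top b$: you correctly anticipate that it is the hard one and sketch a four-piece decomposition, but the bookkeeping you describe is exactly what the projector formula replaces. The paper's formula needs no auxiliary splitting of $a\cap b$ or $a+b$ and no separate verification of independence from choices --- it is defined directly in terms of $(x,y,z,a,b)$ and nothing else. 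So rather than deferring this case, you might look for such a uniform expression; the hint is that the data ``complement of $a$'' and ``complement of $b$'' are encoded precisely by the projectors $P_x^a$ and $P_z^b$.
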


\nin
Let us, following \cite{BeKi}, describe the group law on $U_{ab}$. Its product shall be denoted by
$x \cdot_y z$, or $(xyz)_{ab}$.
The main tool for defining it are the {\em projection operators}: when $a \top x$, denote by
$P^a_x:W \to W$ the linear projector having kernel $a$ and image $x$. 
Then, clearly,
$P_x^a \circ P_z^a = P_x^a$, 
$P_x^a  \circ  P_x^b = P_x^b$, 
and $P^b_z \circ P_z^a = 0$.
From these rules it follows by direct computation that, whenever $x,y,z \in U_{ab}$, the linear operator
\begin{equation}
M_{xabz} :=P_x^a - P^z_b : W \to W
\end{equation}
is invertible, with inverse $M_{zabx}$. Applying it to $y$ gives the group law:
\begin{equation}
x \cdot_y z = (xyz)_{ab} = M_{xabz}(y) = (P_x^a - P^z_b)(y) .
\end{equation}
And $r_{a,x} = r P^a_x + P^x_a$.
Much more can be said about this (cf.\ loc.\ cit.)

\begin{example}
When $\cX = \PP(W)$ is a projective space and $a$ a hyperplane, then $U_a$
is identified with the set of all points of $X$ that do not belong to the hyperplance defined by $a$,
$U_a = \cX \setminus H_a$, and $H_a$ is the ``horizon of $U_a$'' (set of points at infinity of the affine space $U_a$).
In this case, there exists a simple lattice-theoretic formula describing 
 the group law of $U_{ab}$.
\end{example}

\begin{example}
If $\cX = R\PP^1= \Gras_R^R(R\oplus R)$ is a generalized projective line, then the image of the imbedding
(\ref{eqn:imbedding}) is $U_\infty$, isomorphic to $(R,+)$ as a group. Likewise, $U_0 \cong (R,+)$ as a group,
and $U_{0\infty}= U_\infty \cap U_0 = R^\times$ is the multiplicative group of the ring $R$ (which is possibly
non-commutative, according to our assumption).
\end{example}

\subsection{The Hermitian projective line}\label{ssec:Hpl}
Now consider the case where the ring $R$ is a $*$-algebra $\bA$, say over $\K=\bC$.
Since $\bA \PP^1$ generalizes the Riemann sphere, we also write
$\cS:=\bA \PP^1 = \Gras_\bA^\bA(\bA^2)$.
The involution $*:\bA \to \bA$ induces an involution $\tau:\cS \to \cS$.  
This involution is given by taking the orthocomplement $\tau(x) :=x^{\perp,\omega}$
of the submodule $x$ with respect to the ``Poisson form'', that is, the skew-Hermitian sesquilinear form
$\omega:\bA^2 \times \bA^2 \to \bA$ given by
\begin{equation}\label{eqn:omega}
\omega((u_1,u_2),(v_1,v_2))  = \begin{pmatrix} u_1^* & u_2^* \end{pmatrix}  J
\begin{pmatrix} v_2\\ v_1 \end{pmatrix} = 
u_1^* \,  v_2 - u_2^* \,  v_1 .
\end{equation}
where $J$ is the matrix
\begin{equation}\label{eqn:J}
J := \begin{pmatrix} 0&1\\ -1 & 0 \end{pmatrix} .
\end{equation}
Indeed, the orthocomplement of $[(1,a)]$ is $[(1,a^*)]$, because for all $x,y \in \bA$,
$$
\omega ((x,ax),(y,a^*y)) = x^* (a^* y) - (ax)^* y = x^* a^* y - x^* a^* y = 0,
$$
i.e., $\tau(a)=a^*$ on $\bA \subset \cS$. (For general elements of $\cS$, there is no ``closed formula'' describing $\tau$, unless
$\bA$ is commutative.)
The fixed point set of this involution is called the {\em Hermitian projective line over $(\bA,*)$}; it
generalizes the ``equator of the Riemann sphere'', and it
 is the set of {\em $\omega$-Lagrangian subspaces} which we denote by 
 \begin{equation}
\cR:=\Herm(\bA)\PP^1 = \Lag_\omega(\bA^2) = \{ x \in \cS \mid x^{\perp,\omega}=x \} .
\end{equation}
The self-dual geometry $(\cR,\cR)$ is the total space for ``completed quantum theory''. 
Its automorphism group is $\Aut(\cR)=\Aut(\cS,\tau) = \PP \Aut(\omega)=
\Aut(\omega)/Z$, the ``symplectic group'' of $\omega$, modulo its center.
It acts transitively on $\cR$.
In matrix form, the group $\Aut(\omega)$, and its Lie algebra $\Der(\omega)$ can be described by
$2\times 2$-matrices
\begin{align}\label{eqn:autR}
\Aut(\omega) & = \{ g \in \Gl(2,\bA) \mid { }^t g^*  J  g =  J  \},
\\
\Der(\omega)& = \{ M \in M(2,2;\bA) \mid { } ^t M^* J =
- J M \}
\\
& = \Bigl\{ \begin{pmatrix} a & b \\ c & -a^* \end{pmatrix} \mid a,b,c \in \bA , b^* =b, c^* =c \Bigr\} .
\end{align}
As a real vector space, 
$\Der(\omega) = \Herm(\bA)\oplus \bA \oplus \Herm(\bA)$, and
when $\bA = \bC$, we see that $\Der(\omega) = \mathfrak{sl}_2(\R) \oplus i \R$, 
where $i \R$ is the center.
Finally, in Subsection \ref{ssec:strong} we consider also the orthocomplementation map $\alpha$ with respect to the 
positive ``scalar product'' on $\bA^2$, whose automorphism group is the unitary group $\UU(2,\bA)$.

\subsubsection{Toy model} 
In the following table (second column) we give formulae describing the special case of the algebra
$\bA = M(n,n;\bC)$ with involution $A^* = \overline A^t$ (conjugate transpose); this case is a good
finite dimensional  ``toy model'' for quantum mechanics
($n=1$ is ``classical''; $n=2$ is the ``qubit''). 

\ssk
\begin{center}
\begin{tabular}{|l|c|c |}
  \hline
   & Hilbert space setting:   & $*$-algebra setting:   \\
   & $\cH = \bC^n$, $\langle u,v\rangle = \sum_i \overline u_i  v_i$ & $(\bA,*)$
   \\
  \hline
complex associative algebra  &  $\End(\cH) = M(n,n;\bC)$   &   $\bA$  \\
 \hline
involution $*$  & $a^* = \overline a^t$  &  $*$
  \\
 \hline
 observables (Jordan algebra)  & $\Herm(\cH)=\Herm(n,\bC)$  &  $\Herm(\bA)$  \\
  \hline
 projective line $\bA\PP^1$ & $\Gras_\cH(\cH\oplus \cH)=\Gras_n(\bC^{2n})$ & $ \Gras_\bA^\bA(\bA \oplus \bA)$ \\
 \hline
 automorphism group $\Aut(\bA\PP^1)$ & $\PP \Gl(2n,\bC)$ & $\Gl(2,\bA)/\mbox{Center}$
 \\
 \hline
finite part $\bA \subset \bA \PP^1$ & $\{ {\rm Graph}_a 
 \mid a \in \End(\cH)\}$ & $\{ [(1,a)] \mid a \in A\}$, \\  
& $\Graph_a=\{ (x,ax) \mid x \in \cH\}$ & $[(u,v)] =(u,v)\, \bA $
\\
\hline
base point $0$ & $\Graph_0 = \cH \times 0$ & $[(1,0)]$
\\
\hline
base point $\infty$ & $0 \times \cH$ & $[(0,1)]$
\\
\hline
unit $1$ & $\Graph_{\id}={\rm dia}(\cH \times \cH)$ & $[(1,1)]$
\\
\hline
Poisson form $\omega((u_1,u_2),(v_1,v_2))$ & 
$\ldots= \langle u_1,v_2\rangle - \langle u_2,v_1 \rangle$ & 
$ \ldots = u_1^* \,  v_2 - u_2^* \,  v_1$
\\
\hline
involution of $\bA\PP^1$ & $\omega$-orthocomplement
& $x \mapsto x^{\perp,\omega}$
\\ 
\hline
Hermitian  projective  line $\cR$ &
$\Lag_\omega(\cH\oplus \cH)$ & $\Lag_\omega(\bA^2)$
\\
\hline
$G=\Aut(\cR)$ & $\Aut(\omega) \cong \UU(n,n)/\mbox{Center}$ & $\Aut(\omega)/\mbox{Center}$
\\
\hline
\end{tabular}
\end{center}

\msk
\nin
The formulae in the first column are obtained from the general formulae of the second column
by considering a $(2n)\times(2n)$-matrix as a
$2 \times 2$-matrix with entries in the algebra $\bA = M(n,n;\bC)$.
Note,  however, that in the general case (second column) $\bA$ may be an infinite dimensional
algebra over $\bC$, and all the preceding groups are infinite dimensional Lie groups (and they are much ``bigger'' than
those usually considered in quantum mechanics: they contain many ``hidden variables'', that is, degrees of
freedom that are not activated in usual quantum mechanics). 
With some care, fomulae from the ``toy model'' generalize to the infinite dimensional Hilbert space setting of
quantum mechanics (to get conceptual formulae, it may be useful to replace $\bC^{2n}$ by 
$\cH \oplus \cH'$, the direct sum of a Hilbert space and its dual space).


\section{Cross-ratio}\label{ssec:CR}

The cross-ratio is the most important invariant of a projective line. We approach it in two steps:
first of all, we recall the classical definition for the projective line over a {\em commutative} field or ring,
and second, we discuss generalizations to the {\em non-commutative case}.

\subsection{The classical cross-ratio}
Retain notation from the preceding appendix, and
assume $R$ is a commutative field henceforth denoted by $\K$. 
 We denote by $W':=\Hom(W,\K)$ the algebraic dual space of $W$.
Then $(X,X')=(\PP(W),\PP(W'))$ are projective spaces in duality (a hyperplane $H$ in $W$ corresponds to
$\ker(A)$ where $A:W \to \K$ is determined up to a scalar).
We define the {\em cross-ratio} of
 a quadruple
$(x,y,a,b)=([\xi],[\eta],[A],[B]) \in X^2 \times (X')^2$ 
(so vectors $\xi,\eta$ and linear forms $A,B$ are defined up to a scalar) by
\begin{equation}\label{eqn:CRR} 
\CR (x,y;a,b) := \frac{A(\xi)}{A(\eta)} : \frac{B(\xi)}{B(\eta)} =
 \frac{A(\xi) \cdot B(\eta)}{A(\eta) \cdot B(\xi)} \, .
\end{equation}
Note that this is well-defined (independent of scaling of $\xi,\eta,A,B$), and it clearly is an {\em invariant} under
the natural action of the general linear group $\Gl(W)$, acting as usual on $W$ and on $W'$. 
In other words, this definition defines a {\em natural invariant of projective spaces and their duals}. 

\ssk
Now assume that $W$ is two-dimensional, say  $W =\K^2$.
The special feature of this case is the existence of a {\em canonical symplectic form} $\omega$:
it is given by the same formula as (\ref{eqn:omega}), with involution the identity map:
\begin{equation}
\omega : \K^2 \times \K^2 \to \K, \quad ((x_1,x_2),(a_1,a_2)) \mapsto x_1 a_2 - x_2 a_1  .
\end{equation}
This form is $\K$-bilinear (commutativity of $\K$ is crucial here!),
and up to a factor it is invariant under the whole linear group $\Gl(2,\K)$.
We may use it in order to identify  $X$ and $X'$, 
so that our invariant (\ref{eqn:CRR}) is turned into a function defined on $X^4$ and given by 
\begin{equation}
\CR(x,y;a;b) = \frac{\omega(A,\xi)}{\omega(A,\eta)}:\frac{\omega(B,\xi)}{\omega(B,\eta)}=
\frac{ (a_1 x_2 - a_2 x_1) (b_1 y_2 - b_2 y_1)}{ (a_1 y_2 - a_2 y_1)(b_1 x_2 - b_2 x_1)} .
\end{equation} 
Letting in this formula $x_2 = 1 = a_2 = y_2 = b_2$, we get the value
$\frac{(a_1 - x_1)(b_1 - y_1)}{(a_1 - y_1)(b_1 - x_1)}$, which corresponds to the ``usual'' definition of the
cross-ratio $\CR(x_1,y_1;a_1,b_1)$, as given by formula (\ref{eqn:CR}).
Hence both definitions are in keeping. The one given here has several advantages: it features {\em duality}, and
it shows where {\em self-duality} and {\em commutativity of $\K$} enter into the definition.

\subsection{Operator valued cross-ratio}\label{ssec:P3}\label{ssec:GCR}
We wish to define analogs of the cross-ratio in
 the general setting of \ref{ssec:P1}. In fact, it is always possible to define
 an ``operator valued cross-ratio'', but  it may be problematic to extract from
it a  {\em scalar valued} invariant -- to do this, one needs things like
{\em determinants}, or {\em traces}, and these may not exist in infinite dimension. 
Let us explain this: assume  $a \top x$. Then the linear spaces
$(U_x,U_a)$, with origin $(a,x)$, are in duality with each other, in the following sense:
with respect to the decomposition $W = a \oplus x = a \times x$,
every element $b \in U_x$  can be written as
 the {\em  graph} of a unique linear map
$\beta : a \to x$, and every element $y \in U_a$ as graph of a linear map
$\eta: x \to a$.
Thus, as pair of linear spaces,
\begin{equation}
(U_x,U_a) = (\Hom_R (a,x), \Hom_R(x,a) ) .
\end{equation}
Note that this pair is again an associative pair. This observation leads to define
 two   ``canonical kernel functions''
\begin{align}\label{eqn:GCR1}
K_{x,a} : U_x \times U_a \to \End(x), &  \quad  (\beta,\eta) \mapsto \beta \circ \eta , \\
K_{a,x} : U_a \times U_x \to \End(a),  & \quad  (\eta,\beta) \mapsto \eta \circ \beta .
\end{align}
Now we define the {\em generalized (operator valued) cross-ratio}   by
\begin{align}\label{eqn:GCR2}
\CR(y,b; x,a) := K_{x,a}(b,y) \in \End(x)
\end{align} 
(so $\CR(y,b;0,\infty) = by$).
The construction is natural, that is, invariant under the
symmetry group $\Gl(W)$: 
\begin{equation}
\forall g \in \Gl(W): \qquad K_{gx,ga}(gb,gy)=g K_{x,a} (b,y) g^{-1} .
\end{equation}
The operators now  live in a space depending on $x$ (or on $a$). In technical terms, they define an {\em invariant
section of a vector bundle}. 
Moreover, at least when $a$ and $b$ are transversal, the operator valued cross ratio is closely related to
the {\em left, right and middle multiplication operators} defined in \cite{BeKi}.


\subsection{Scalar valued cross-ratio (expectation value)}
To extract a well defined scalar from the operator valued cross-ratio, we have to compose the $\End(x)$-valued cross-ratio
with a function
$\End(x) \to \K$ that is {\em conjugation invariant}, such as  {\em determinant} or {\em  trace functions}
(see Appendix \ref{app:TDI} for more on traces),
\begin{equation}\label{eqn:CI}
\det(g A g^{-1}) = \det(g), \qquad \tr(g A g^{-1} ) = \tr (A) , 
\end{equation}
giving two candidates to define a scalar valued cross-ratio:
\begin{equation}\label{eqn:GCR3}
(x,y;a,b) \mapsto \det (K_{x,a}(b,y)), \qquad
(x,y;a,b) \mapsto \tr(K_{x,a}(b,y)) .
\end{equation}
More generally, $\det$ and $\tr$  could be replaced here by any map
$\chi:\End(x) \to \K$ that satisfies (\ref{eqn:CI}).
In the same way, when $\bA$ is a non-commutative algebra over $\R$, 
there is an operator valued cross-ratio on
 $\bA\PP^1$; but to extract from it an $\R$-scalar valued one, we again need some conjugation invariant
map $\chi:\bA \to \R$. 
Defining such maps is, in infinite dimension, closely related to {\em integration theory}, see the following
appendix.

\section{Pairings, densities, and traces}\label{app:TDI}

Recall from subsection \ref{sec:class-revisted} that $'\R$ and $\R'$ are just two copies of $\R$, without  a fixed base,
and $\R'$ is the dual space of $'\R$.
If $M$ is, say, a topological space, or a general measurable space, we denote in this appendix by 
$F(M, \, '\R)$ and $F(M,\R')$ the set of all measurable functions on $M$. 

\begin{definition}
A {\em pairing on $(F(M, { } '\R), F(M,\R'))$} is a map 
\begin{equation*}
\Pi: F(M,\, '\R) \times F(M,\R') \to \R \cup \{ \infty\} = \R \PP^1, \quad (f,g) \mapsto \Pi(f,g)
\end{equation*}
having the following properties: 
\begin{enumerate}
\item 
it is $\R$-bilinear (in the sense of 
\href{https://en.wikipedia.org/wiki/Riemann_sphere#Arithmetic_operations}{extended arithmetic operations} including rules like
 $x+\infty = \infty$, $\lambda \infty = \infty$, for $x,\lambda \in \R$),
\item
for all $f,g,h \in F(M,\R)$, we have $\Pi (fh,g)=\Pi(f,hg)$, 
\item
it is positive: if $f\geq 0$ and $g\geq 0$, then $\Pi(f,g) \geq 0$,
\item
it is $\sigma$-continuous: if $f_n \downarrow 0$ $(n\to \infty)$ (pointwise monotone convergence), then
$\Pi (f_n ,g) \downarrow 0$ ($n\to \infty$) (whenever $\Pi (f_N ,g) \not=\infty$ for at least some $N \in \N$), and likewise in
the second argument.
\end{enumerate}
\end{definition}


\nin
If $\mu$ is a measure on $M$, we write $\mu(f)=\int_M f d\mu$, and then the formula
\begin{equation}\label{eqn:PP1}
\Pi_\mu (f,g) := \mu (fg)
\end{equation}
defines a pairing. 
This pairing is ``invariant'' in the following sense:
let $G$ be the group of measurable bijections of $M$ preserving the collection of sets of measure zero. 
An element $\phi \in G$  need not preserve $\mu$, but
$\phi_* \mu (h) := \mu (h \circ \phi)$ defines another measure $\phi_* \mu$, which 
is absolutely continuous with respect to
$\mu$.  Thus, by the Radon-Nikodym theorem, there is a function $\phi'$ with $\phi_* \mu = \phi' \mu$, i.e.,
for all $h$,
\begin{equation}\label{eqn:RN}
\mu(h \circ \phi) = \mu( \phi' \cdot h) . 
\end{equation}
(to be precise, $\phi'$ is defined up to sets of measure zero). Applying  (\ref{eqn:RN}) twice, we get the ``chain rule'':
 for all $\phi,\psi \in G$, we have 
$(\phi \circ \psi)' = \phi' \cdot (\psi' \circ \phi^{-1})$. 
Now we let act $G$ in the ``usual'' way by $\phi.f = f\circ \phi^{-1}$
on ``usual'' functions $F(M,\,'\R)$, and via
\begin{equation}
\phi. h :=  \phi' \cdot (h \circ \phi^{-1})
\end{equation}
on $F(M,\R')$. By the ``chain rule'', this is indeed an action. 
When equipped with this action, we call the space $F(M,\R')$ the
{\em space of $\mu$-densities}. 
Now, using (\ref{eqn:RN}),
\begin{align*}
\Pi_\mu (\phi.f,\phi.h) & = \mu (f\circ \phi^{-1} \cdot \phi' \cdot g \circ \phi^{-1} ) \\
& = \mu ( ((f\circ \phi^{-1})\cdot (h \circ \phi^{-1})) \circ \phi ) \\
& = \mu ( f h) =\Pi_\mu (f,h) .
\end{align*}
This proves:

\begin{proposition}
With notation as above,  the pairing (\ref{eqn:PP1}) is  invariant under the group $G$: for all $\phi\in G$, we have
$\Pi_\mu (\phi.f,\phi.h)  = \Pi_\mu (f,h)$.
\end{proposition}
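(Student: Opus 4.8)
The plan is to verify directly that the chain of equalities displayed just before the statement actually holds, spelling out the two places where the hypotheses on $G$ and $\mu$ are used. First I would recall the set-up: $G$ is the group of measurable bijections of $M$ preserving null sets, $\phi_*\mu(h)=\mu(h\circ\phi)$ is absolutely continuous with respect to $\mu$, so Radon--Nikodym gives $\phi'$ with $\mu(h\circ\phi)=\mu(\phi'\cdot h)$ for all measurable $h$, and the two actions are $\phi.f=f\circ\phi^{-1}$ on $F(M,{}'\R)$ and $\phi.h=\phi'\cdot(h\circ\phi^{-1})$ on $F(M,\R')$. The chain rule $(\phi\circ\psi)'=\phi'\cdot(\psi'\circ\phi^{-1})$ (obtained by applying the Radon--Nikodym identity twice) is exactly what makes the second formula a genuine left action, so I would either cite it as already established in the excerpt or insert the one-line computation.

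Next I would carry out the computation itself. Starting from $\Pi_\mu(\phi.f,\phi.h)=\mu\bigl((f\circ\phi^{-1})\cdot\phi'\cdot(h\circ\phi^{-1})\bigr)$, the key rewriting is to recognise the integrand as $\phi'\cdot\bigl(((f\circ\phi^{-1})\cdot(h\circ\phi^{-1}))\circ\phi\bigr)\circ\phi^{-1}$ — more transparently, set $H:=(f\circ\phi^{-1})\cdot(h\circ\phi^{-1})=(f\cdot h)\circ\phi^{-1}$, so the integrand is $\phi'\cdot(H)$ with $H=(fh)\circ\phi^{-1}$. Then apply the Radon--Nikodym identity $\mu(\phi'\cdot k)=\mu(k\circ\phi)$ with $k=H$, giving $\mu(\phi'\cdot H)=\mu(H\circ\phi)=\mu\bigl((fh)\circ\phi^{-1}\circ\phi\bigr)=\mu(fh)=\Pi_\mu(f,h)$. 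That is precisely the displayed three-line chain; I would just present it as an \texttt{align*} block, being careful to keep it as a single display with no blank lines. One should also note in passing that $\Pi_\mu(f,h)=\mu(fh)$ does define a pairing (bilinearity, the associativity condition $\Pi(fk,g)=\Pi(f,kg)$, positivity, and $\sigma$-continuity all follow from the corresponding properties of the integral), though for the proposition itself only the invariance is being asserted.

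The main obstacle is not algebraic but measure-theoretic bookkeeping: one must make sure every identity is interpreted ``up to null sets'' and that the class of functions on which $\phi'$ is defined and on which Radon--Nikodym applies is the same class $F(M,{}'\R)=F(M,\R')$ of all measurable functions, allowing values in $\R\cup\{\infty\}$. In particular $\phi'$ is only defined $\mu$-a.e., the substitution $k\mapsto k\circ\phi$ must be justified for $\mu$-measurable $k$ (which is where $\phi\in G$, i.e. $\phi$ and $\phi^{-1}$ preserve null sets, is essential), and the extended arithmetic on $\R\PP^1$ means one should check the manipulations are valid even where the functions take the value $\infty$ — monotone convergence and the convention $\lambda\infty=\infty$ handle this, but it deserves a sentence. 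Once these points are acknowledged, the proof is the short computation above, and I would keep the write-up to a few lines, since everything substantial (existence of $\phi'$, the chain rule, the pairing axioms) has already been laid out in the excerpt preceding the statement.
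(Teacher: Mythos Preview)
Your proposal is correct and follows essentially the same approach as the paper: the proof in the paper is exactly the three-line computation you reconstruct, applying the Radon--Nikodym identity $\mu(\phi'\cdot k)=\mu(k\circ\phi)$ to $k=(fh)\circ\phi^{-1}$. Your additional remarks on null-set bookkeeping and extended arithmetic are more careful than what the paper spells out, but not required there.
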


\nin
The following examples illustrate that, under natural assumptions, the pairing
can be considered as ``canonical'' --  it does not really depend on $\mu$, but only on the
class of measures having the same sets of measure zero as $\mu$: 
\begin{enumerate}
\item
If $M$ is a differentiable manifold, then we define integration with respect to volume forms, as usual.
Then densities in our sense coincide with 
\href{https://en.wikipedia.org/wiki/Density_on_a_manifold}{those in the sense of differential geometry},
and our trace is the  pairing between densities and (say) continuous functions. 
This pairing is invariant under the group of all diffeomorphisms, which is a subgroup of $G$.
\item
Generalizing the preceding item,  whenever we have a partition of unity
subordinate to an atlas of $M$, then pairings defined with respect to  chart domains can be glued together to give
a pairing on $M$.
\item
If $M$ is a finite set, then $\tr (f, g) :=\sum_{p\in M} f(p) g(p)$ defines a pairing. It is invariant under the group of all bijections
of $M$.
\end{enumerate}
Once we have a  theory of pairings on the ``classical pair''
$(F(M, \, '\R),F(M,\R'))$, one would like to develop such a theory on more general associative pairs $(\bA^+,\bA^-)$.
In case $\bA^+ = \bA^- = \bA$ is a $C^*$-algebra, this should more or less correspond to spectral theory, and hence
developing such a theory is a big task clearly exceeding the scope of the present work. 
We shall just state the following definitions:

\begin{definition}\label{def:trace}
Let $(\bA,*)$ be a $P^*$-algebra (def.\ \ref{def:P*-algebra}).
A {\em trace} on $\bA$ is  a linear map (where ``linear'' is understood in the generalized sense, as above)
$$
\tr :  \bA \to \bC \PP^1
$$
that is symmetric: $\tr(ab)=\tr(ba)$ and
positive: whenever $b \in \Herm(\bA)$ is positive, then 
$\tr(b) \geq 0$.
It may be normalized by the following condition:
if $a \in \Herm(\bA)$  is an idempotent of rank one, then $\tr(a) = 1$.
\end{definition}

\nin
More geometrically, the trace map could be defined as an ordered morphism of projective lines
$\bA \PP^1 \to \bC \PP^1$, and possibly normalized by the condition that on each ``intrinsic projective
line'' $\bC \PP^1$ contained in $\bA \PP^1$, it should induce the identity mapping. 
Given such a trace function, the binary map
$(a,b)\mapsto \Pi(a,b):= \tr (ab)$ then should define what one might call a 
``pairing on $(\bA,\bA)$'', and
$(x,y;a;b) \mapsto \tr( K_{x,a}(b,y))$ an ``expectation value'' (scalar valued cross-ratio).


\end{document}